\documentclass[lettersize,journal]{IEEEtran}
    \usepackage{multicol}

\usepackage{siunitx}

\usepackage{comment}
\usepackage{graphicx,color}
\usepackage{amsmath, amsthm, amsfonts, amssymb, amsbsy,nccmath}
\usepackage{mathtools}
\theoremstyle{boldthm}
\usepackage{tikz}

\newcommand{\myarrow}[1][-45]{%
  \mathrel{%
    \text{$
     \begin{tikzpicture}[baseline = -0.5ex]
        \draw[->] (-1,1) node -- (1,-1) node; 
    \end{tikzpicture}
    $}%
  }%
}%
\usepackage{algorithm}
\usepackage{enumerate}
\usepackage{lipsum}
\newtheorem{lemma}{Lemma}
\usepackage{textgreek}
\usepackage{textcomp}

\usepackage{algorithmic}
\DeclareMathOperator{\KL}{KL}
\DeclareMathOperator{\E}{\mathbb{E}}
\usepackage{bbm}
\usepackage[sort,compress]{cite}
\usepackage{epsfig}
\usepackage{epstopdf}
\usepackage{dsfont}
\usepackage{epstopdf}

\usepackage{sidecap, caption}
\usepackage{caption}
\usepackage{subcaption}
\theoremstyle{definition}
\newtheorem{definition}{Definition}

\theoremstyle{Property}

\theoremstyle{assumption}

\theoremstyle{proposition}
\newtheorem{proposition}{Proposition}[section]
\theoremstyle{corollary}

\usepackage[inline]{enumitem}   
\makeatletter
\newcommand{\inlineitem}[1][]{%
\ifnum\enit@type=\tw@
    {\descriptionlabel{#1}}
  \hspace{\labelsep}%
\else
  \ifnum\enit@type=\z@
       \refstepcounter{\@listctr}\fi
    \quad\@itemlabel\hspace{\labelsep}%
\fi}
\makeatother
\DeclareMathOperator{\Cov}{Cov}

\newcommand\norm[1]{\left\lVert#1\right\rVert}

\newtheorem{theorem}{Theorem}
\usepackage{url}

\newcommand{\beq}{\begin{equation}}
\newcommand{\eeq}{\end{equation}}

\newcommand{\m}{{\!\, -\!\,}}

\newcommand{\mZ}{\mathcal{Z}}

\newcommand{\mY}{{\mathcal Y}}

\DeclareMathOperator*{\argmax}{arg\,max}
\DeclareMathOperator*{\argmin}{arg\,min}

\def\adots{\mathinner{\mskip0mu\raise0pt\vbox{\kern7pt\hbox{.}}\mskip3mu
          \raise4pt\hbox{.}\mskip3mu\raise8pt\hbox{.}\mskip0mu}}

\newcommand{\tr}{\mbox{tr}}

\usepackage{bm}

\newcommand{\bmx}{{\bm x}}
\newcommand{\bmy}{{\bm y}}

\newcommand{\mX}{\mathcal{X}}

\newcommand{\mD}{\mathcal{D}}
\newcommand{\mbS}{\mathbb{S}}

\newcommand{\mbE}{\mathbb{E}}
\newcommand{\mP}{\mathcal{P}}

\newcommand{\mR}{\mathcal{R}}

\newcommand{\mN}{\mathcal{N}}

\newcommand{\mM}{\mathcal{M}}

\newcommand{\bmzh}{\widehat{\bmz}}
\newcommand{\bmyh}{\widehat{\bmy}}

\newcommand{\btheta}{\boldsymbol{\theta}}

\newcommand{\mbI}{\mathbb{I}}

\makeatletter
\newcommand\fs@spaceruled{\def\@fs@cfont{\bfseries}\let\@fs@capt\floatc@ruled
  \def\@fs@pre{\vspace{0.5\baselineskip}\hrule height.8pt depth0pt \kern2pt}%
  \def\@fs@post{\kern1pt\hrule\relax}%
  \def\@fs@mid{\kern2pt\hrule\kern2pt}%
  \let\@fs@iftopcapt\iftrue}
\makeatother

\newcommand{\bit}{\begin{itemize}}
\newcommand{\eit}{\end{itemize}}

\newcommand{\mK}{\mathcal{K}}
\newcommand{\mL}{\mathcal{L}}

\newcommand{\mC}{\mathcal{C}}

\newcommand{\bmu}{\boldsymbol{u}}
\newcommand{\bmz}{\boldsymbol{z}}

\newcommand{\bmX}{{\boldsymbol{X}}}

\newcommand{\bms}{{\boldsymbol s}}

\DeclarePairedDelimiter\abs{\lvert}{\rvert}%

\usepackage{lipsum}

\newcommand{\xdownarrow}[1]{%
  {\left\downarrow\vbox to #1{}\right.\kern-\nulldelimiterspace}
}

\makeatletter

\newcommand\longleftrightarrowfill@{%
  \arrowfill@\leftarrow\relbar\rightarrow}
\makeatother

\restylefloat{algorithm}

\usepackage{flushend}
\begin{document}
\title{ Compositional Semantic Communication for Physical AI: Category Theory Meets Game Theory \vspace{-2mm}}
\vspace{-0mm}
\author{\IEEEauthorblockN{
Christo Kurisummoottil Thomas, \IEEEmembership{Senior Member, IEEE,} Walid Saad, \IEEEmembership{Fellow, IEEE,} \\ and Emilio Calvanese Strinati, \IEEEmembership{\normalsize Member, IEEE.}
\vspace{0mm}\vspace{-8mm}}
\thanks{\scriptsize Christo Kurisummoottil Thomas is with Department of Electrical and Computer Engineering, Worcester Polytechnic Institute, Worcester, MA, 01609, USA, Walid Saad is with the  Bradley Department of Electrical and Computer Engineering,
Virginia Tech, Alexandria, VA, 22305, USA, and Emilio Calvanese Strinati is with CEA-Leti, Grenoble, 38054, France. Emails:\{cthomas2@wpi.edu,walids@vt.edu, and emilio.calvanese-strinati@cea.fr\}. \\
\indent This research was supported by the U.S. National Science Foundation under Grant CNS-2225511. The work of Emilio Calvanese Strinati is supported by the EU-funded projects``6G-GOALS" under Grant agreement  No n.101139232 and "6G-DISAC"  under Grant Agreement No 101139130, and the French project funded by the program "PEPR Networks of the Future" of France 2030. }}
\maketitle

\vspace{-3mm}
\begin{abstract}\vspace{-0mm}
Physical artificial intelligence (AI) systems involve distributed sensing agents with embedded AI models that must coordinate efficiently to perceive, reason about, and act within networked environments in order to enable real-time operation. Traditional coordination approaches that transmit raw sensor data are limited by significant  communication overhead, processing latency, and redundant information transmission. While semantic communication (SC) can address these challenges by transmitting only critical task-relevant information,  existing deep learning-based joint source-channel coding methods exhibit limited adaptability to changing physical AI environments, poor generalization to out-of-distribution data, and scalability challenges. 
To overcome such challenges, in this paper, a comprehensive framework for \textit{compositional SC (CSC)} is proposed. CSC enables multiple heterogeneous physical AI sources to transmit semantic representations (SRs) that compose meaningfully at a base station (BS) or an edge server for remote inference tasks. In particular, a novel information-theoretic measure for compositional semantics is developed using category-theoretic constructs, that  quantifies the causal and compositional contribution of each device's concepts to inference tasks, capturing properties that mutual information cannot express. Second, Grothendieck topologies and presheaves are leveraged to formalize   semantic composition across heterogeneous devices, thereby ensuring that composed representations maintain consistency, context-invariance, and task relevance. 
Using the categorical frameworks as foundation, the multi-device coordination problem is formulated as a Stackelberg game in which physical AI devices (leaders) strategically commit to aligned encoding strategies and the BS (follower) optimally composes received SRs. An alternating direction method of multipliers (ADMM)-based algorithm is proposed to solve the game, enabling devices to compute equilibrium signaling strategies via limited message passing with the BS.  The existence of the equilibrium is shown under mild regularity conditions, and the equilibrium is Pareto optimal when compositional information yields increasing collective benefit. 
  Simulation results demonstrate that the proposed  game-theoretic approach achieves 
superior bandwidth reduction upto \SI{17}{\%}  over baselines, 
lower end-to-end latency,  \SI{53}{\%} less than 
cooperative multi-agent, distributed gradient descent, and uniform selection based CSC methods, while 
maintaining \SI{85}{\%} inference accuracy across diverse autonomous driving 
scenarios with multiple sensors and large number of semantic concepts.

\end{abstract}\vspace{-0mm}

\vspace{-7mm}
\section{Introduction}
\label{section_intro}
\vspace{-1mm}

Physical artificial intelligence (AI) systems comprise distributed, autonomous devices that can perceive, reason about, and act upon their environments in real-time \cite{mishra2025temporal}.  
Representative physical AI scenarios include autonomous robotic systems such as scalable heterogeneous fleets (humanoids, mobile robots, drones) coordinating toward shared industrial goals, autonomous vehicles navigating via shared situational awareness, and distributed teams enabling resilient coordination in disaster response. Other examples of physical AI include large-scale cyber-physical infrastructure such as smart cities that must jointly optimize transportation and energy,  and cyber-physical digital twins that need to maintain real-time understanding and autonomous decision-making across large-scale environments through cooperative inference. 
Different physical AI devices typically differ in sensing modalities and underlying AI architectures, leading to heterogeneous internal representations that encode the semantics of their environment. Because downstream decisions at a remote center, which is typically either an edge server or a base station (BS), often rely on information exchanged across devices, such misaligned representations can reduce inference accuracy, and hinder coordinated behavior. This creates a fundamental challenge for the physical AI ecosystem:
\emph{How can devices exchange compact, aligned representations that support accurate inference at a remote center in real-time with minimal bandwidth and latency overhead?}
In this regard,  transmitting raw sensor data incurs massive communication overhead from high-dimensional streams and introduces latency from centralized processing, while also wasting bandwidth due to redundancy across correlated observations. Alternatively, deploying deep learning (DL) models at the edge via federated learning (FL) \cite{Chen2020FederatedWireless} and \cite{Li2020Federated} can avoid transmission overhead, but it requires extensive retraining whenever environment dynamics change, or new sensor modalities or devices are introduced. This retraining  delays inference decisions beyond real-time constraints of physical AI systems and also prevents them from adapting when new autonomous agents or environmental conditions emerge. 
Moreover, FL operates on black-box AI that do not capture the semantics of the transmitted information and cannot support reasoning over the exchanged data. 
This necessitates mechanisms for coordinating representations that compose formally and interpretably while remaining \emph{task-relevant, semantically consistent, and generalizable} across changing sensor configurations and environment dynamics.

To address these limitations, semantic communications (SC) \cite{Cstrinati20216g , KountourisCommMag2021,ChaccourArxiv2022,Lu2022Semantic} emerged as a promising paradigm that minimizes transmitted data by exploiting reasoning techniques to reconstruct or generate high-quality information with essential semantics at the receiver (RX), rather than transmitting raw observations. 
However, existing SC works such as \cite{Cstrinati20216g , KountourisCommMag2021,ChaccourArxiv2022,Lu2022Semantic}, which rely on DL-based joint source-channel coding (JSCC), have several key limitations. First, deep JSCC has \emph{limited real-time adaptability}, since data-driven DL models  typically require extensive retraining when distribution shifts occur, making them unsuitable for real-time applications where environmental conditions continuously evolve. 
Second, the large scale deployment of existing deep JSCC methods are hindered by \emph{scalability challenges}, as the number of data sources grow and their modality heterogeneity increases. 
 Third, data at distinct sources may be processed by different AI models, so the same real-world concept can be encoded as different semantic representations (SRs), where an SR is a compact learned vector capturing the task-relevant content of a device's observation. Because these SRs live in incomparable, device-specific spaces, the remote center must align them before they can be combined \cite{HuttebrauckerArxiv2024,fiorellino2025frame,sana2023semantic,pannacci2025semantic,huttebraucker2024soft,grimaldi2025learning}. Fusing them without such alignment corrupts the meaning of the composed result and degrades the quality of the downstream inference it supports.

\vspace{-4mm}\subsection{Prior Works}\vspace{-1mm}

The state-of-art in SC ~\cite{ShaoArxiv2022,Lu2022Semantic,YanWCL2022,WangJSAC2022,XieJSAC2022,Li2023NOMA,seo2023semantic,Nguyen2024SwinSemantic} largely focuses on data-driven DL–based semantic encoders and decoders that compress data. For instance, the   works in \cite{ShaoArxiv2022} and \cite{Lu2022Semantic,YanWCL2022,WangJSAC2022} advance the basic SC paradigm through conceptual frameworks, resource allocation, and optimization methods, but they assume single-device or single-modality settings with end-to-end trained models. Multiuser SC has been explored in \cite{XieJSAC2022,Li2023NOMA,seo2023semantic,Nguyen2024SwinSemantic}. Specifically, the authors in \cite{XieJSAC2022} introduce task-oriented multi-user SC, where multiple devices transmit semantically aligned information for a shared task, while \cite{Li2023NOMA}  develop an SC solution that improves spectrum efficiency and     multi-user coordination via non-orthogonal multiple access  without degrading task performance. Unlike these physical-layer–focused works \cite{XieJSAC2022,Li2023NOMA,seo2023semantic}, the work in \cite{Nguyen2024SwinSemantic} applies task-oriented multi-user SC to visual question answering, demonstrating  that collaboratively transmitted semantic features improve inference quality over single-device approaches. Existing works on SC for remote inference include \cite{ChenTCCN2023} and \cite{LoWCL2023}. In \cite{ChenTCCN2023}, the authors use spiking neural networks for sensing, processing, and communication. Such approaches are energy-efficient, however, they yield low inference accuracy, and are accompanied with ineffective training for continuous-valued data. Moreover, the solution of \cite{ChenTCCN2023} has limited applicability to practical scenarios.  The work in \cite{LoWCL2023} focuses on composing multi-view images of a single object or event to improve inference accuracy, but remains data-driven and lacks generalizability and scalability.

Prior reasoning-driven point-to-point SC approaches \cite{ChristoTWCArxiv2022,ChristoJSAITArxiv2023,dacunto2025causal,d2026PdLlearning} use causal reasoning and neurosymbolic AI, but face a fundamental scalability barrier in multi-device systems with heterogeneous modalities. As the number and diversity of sensing sources grow, aligning independently learned SRs and composing them meaningfully becomes harder. 
Herein, we must transition to compositional SC (CSC), that enables combining distinct semantics from diverse sensing sources and then perform deductive reasoning to generate novel information to aid in several downstream tasks. 
Compositionality complements causality in an important way. While causal models are typically constrained to inference over fixed graph structures, compositional approaches enable reassembly of semantic concepts or even distinct causal graphs into novel configurations. This supports generalization to scenarios and sensor combinations that purely causal models cannot handle on their own.

\vspace{-5mm}\subsection{Contributions}
\vspace{-1mm}
The main contribution of this paper is a comprehensive framework for CSC, grounded in the principle that compositionality \cite{SinhaArxiv2024} and \cite{ boeshertz2025predictive} , which is the ability to decompose complex structures into simpler components and recombine them into novel configurations, enables physical AI systems to generalize beyond their training data. Unlike existing SC approaches that transmit task-specific feature encodings, we formalize compositionality via category-theoretic structures and game-theoretic mechanisms. This enables heterogeneous physical AI devices, such as robotic platforms with diverse sensors and AI models, to coordinate through aligned SRs in which concepts extracted by one device compose with those from others. Consequently, devices coordinate without task-specific retraining, novel sensor combinations generalize to new environments, and communication scales with conceptual complexity rather than raw data, reducing bandwidth and latency. As such, our key contributions include:
\vspace{-1mm}\begin{itemize}
    \item We develop a novel information-theoretic measure for compositional semantics using category theory constructs such as fibrations \cite{velarde2024fibration} and Grothendieck topos \cite{Grothendieck1972} and \cite{BelfioreBennequin2021Topos}. Unlike mutual information, which captures only statistical dependence, our measure is grounded in intrinsic information and quantifies the causal and compositional contribution of each device's concepts to downstream tasks. This enables principled decisions about what semantic content to transmit.
    \item We use category-theoretic structures such as Grothendieck topology and presheaves to formalize how SRs from heterogeneous devices can be composed while maintaining semantic consistency and preserving task relevance.
    \item We formulate the SR alignment and composition problem as a noncooperative Stackelberg game in which devices (leaders) commit to encoding strategies and the BS (follower) observes all representations before optimally composing them. We derive the equilibrium composition and semantic encoding strategies as solutions to this game, providing a theoretically grounded protocol for decentralized SC.
    \item We analytically prove that: (i) A Stackelberg equilibrium (SE) exists under mild regularity conditions on strategy spaces and utility functions (Theorem 2),  and (ii) the equilibrium achieves Pareto efficiency when device concepts are synergistic (Theorem 4), providing theoretical foundations for practical CSC implementation.
    \item We develop a practical algorithm based on alternating direction method of multipliers (ADMM) that enables devices to compute equilibrium strategies through limited message passing with the BS, proving convergence to the SE while maintaining full decentralization and scalability.
    
    \item Simulation results on multi-sensor inference tasks demonstrate that proposed CSC maintains high inference accuracy with upto $17\%$ reduction in communication overhead and $53\%$ lower end-to-end latency compared to baseline data-driven SC methods, while maintaining robust generalization to novel unseen scenarios.
\vspace{-1mm}\end{itemize}
The rest of the paper is organized as follows. Section II presents the compositional model. Section III develops a category-theoretic semantic information measure. Section IV formulates the Stackelberg game and derives equilibrium solutions via backward induction. Section V presents simulation results, and Section VI concludes the paper.

\begin{figure*}[t]
\centerline{\includegraphics[width=0.77\textwidth]{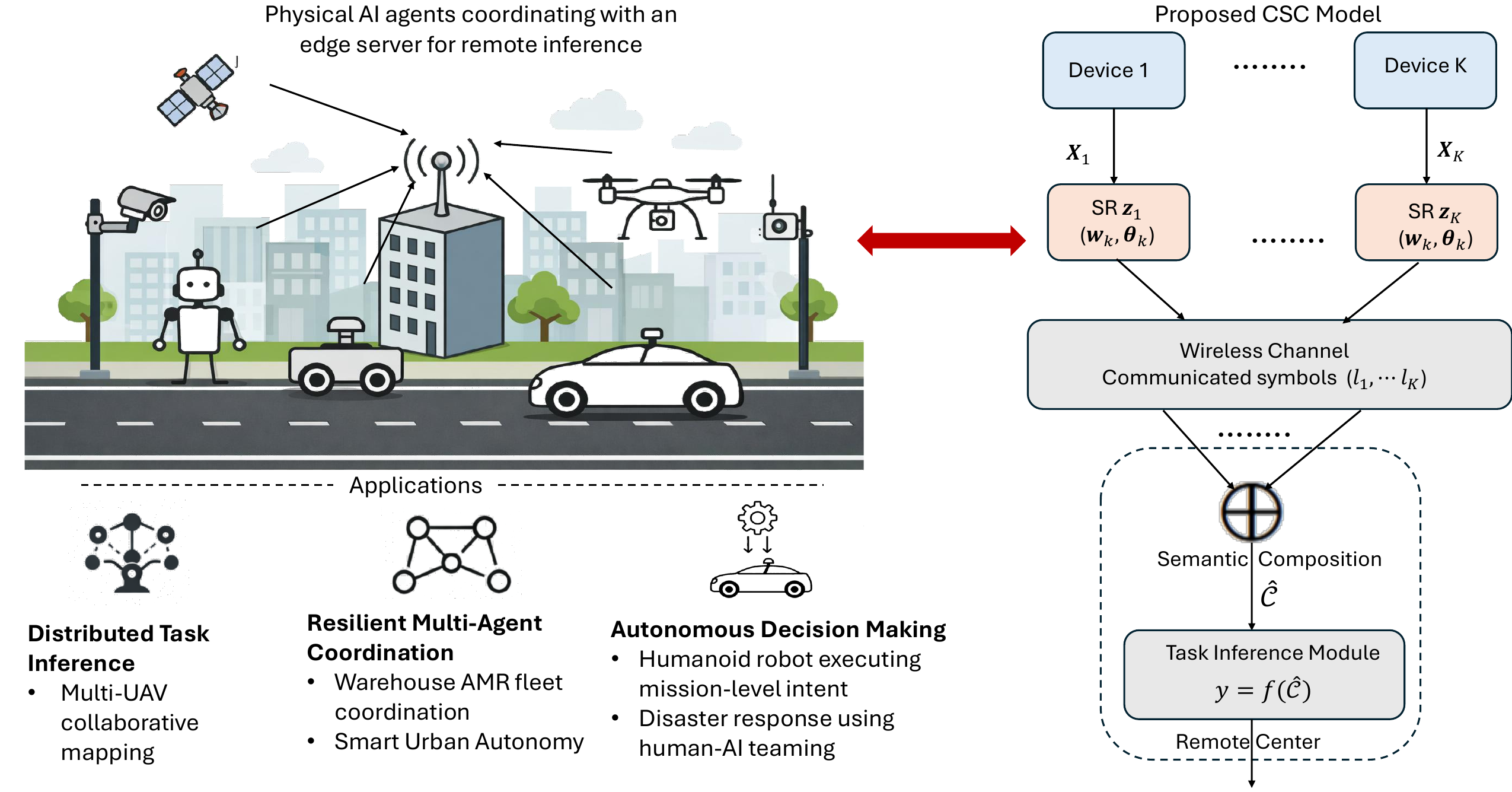}}\vspace{-3mm}
\caption{\small Overview of the proposed CSC system, with applications across physical AI and networking.}
\label{NetworkDiagram}\vspace{-0mm}
\vspace{-5mm}
\end{figure*}
\vspace{-3mm}\section{System Model}\vspace{-1mm}

\subsection{System Architecture and Problem Setup}\vspace{-1mm}

Consider a physical AI system illustrated in Fig. 1, composed of a set $\mathcal{K}$ of $K$ autonomous agents or robotic platforms, each equipped with heterogeneous sensors for environmental perception. 
Each device $k \in \mathcal{K}$  is equipped with sensors that observe local data, represented using the matrix $\boldsymbol{X}_k \in \mX \subseteq\mathbb{R}^{d \times N}$, where $d$ is the feature dimension and $N$ is the number of raw features. 
Instead of transmitting the entire raw data features $\boldsymbol{X}_k,  \forall k \in \mathcal{K}$, which incurs an overhead of order $O(K d \times N)$ bits, latency from transmission and centralized processing, and redundancy due to correlated observations across devices, we will adopt an SC approach. In this approach, each device $k$ extracts semantic concepts $\mC_k$ from $X_k$. Here, \emph{semantic concepts}\footnote{\scriptsize The fundamental challenge we address is not how to extract concepts from raw data at the device or received data at the BS, but rather how to align and coordinate SRs across heterogeneous devices such that independently-learned concepts compose meaningfully at the RX. Thus, we assume that concept extraction is a solved problem at both the transmitter (TX) and  RX ends, can be efficiently handled using existing methods, such as energy-based models \cite{du2021unsupervised}, thereby focusing exclusively on the multiuser language coordination problem: ensuring that device $k$'s SR $\bmz_k$ can be correctly interpreted and composed with representations from other devices in a way that preserves task relevance and compositional structure. } are those features present in the data that have relevance with respect to the remote inference tasks at the BS. Each device transmits SRs $\bmz_k \in \mathbb{R}^m$, learned using our approach in Section~\ref{eq_language_alignment}, and consisting of   minimalistic representation of the semantic concepts, to the BS, where $m \ll d \times N$.  The BS then receives the set $\{\bmz_1, \dots, \bmz_K\}$ and extracts the concepts $\mC_i$.  Specifically, the BS aligns and combines the semantic information from all devices into a single composed representation:
$\hat{\mathcal{C}} = \bigoplus_{i=1}^K \hat{\mathcal{C}}_i,$
where the composition operation $\bigoplus$
 integrates the semantic content from each device $
i$ into a cohesive whole. The BS then performs \emph{remote inference}, which is defined as the downstream task that depends on multiple devices' perspectives, by applying a task-specific function to the composed representation:
$
\mathbf{y} = f(\hat{\mathcal{C}}),$
where ${\mathbf{y}}
$ is the inference result. This result drives the \emph{application} layer, which takes actions based on the collective understanding such as issuing collision-avoidance maneuvers, generating anomaly alerts, or making physical AI control decisions.
The key challenge is that devices $\mK$ learn their representations $\bmz_k$ independently using different sensor modalities and neural networks, resulting in incomparable representation spaces $E_1, \ldots, E_K$. The composition operation $\bigoplus$
 must therefore not only combine information from all devices but also ensure semantic alignment, that a concept encoded by one device is correctly interpreted by the BS and can be meaningfully combined with concepts from other devices. This is the core problem that CSC addresses.

\vspace{-4mm}\subsection{Compositional SRs: Basic Definitions}\vspace{-1mm}

We now formalize how concepts should be represented to enable compositional reasoning.
\vspace{-2mm}\begin{definition} \label{CR} A \emph{semantic concept representation function} $R : \mC\rightarrow \mR^d$ is compositional if for any concepts $c_i, c_j \in \mC$ and their composition $ c_i \oplus c_j$, there exist weights $w_i, w_j \in \mathbb{R}_+$ such that:
$
    R(c_i \oplus c_j) = w_i R(c_i) + w_j R(c_j) .$
\vspace{-2mm}\end{definition}
In other words, the representation of composed concepts is a weighted sum of individual concept representations. The weights reflect the relative semantic importance of each concept to the composed object, and depend on remote inference  at the BS. Compositional representations enable deductive reasoning such that the BS can combine individual concept vectors to understand complex objects. For example, if device 1 sends \{person, sitting\} and device 2 sends \{conversation\}, the BS can compose these to infer \{two people, sitting, conversing\}. The composition example above assumes that each device's representation $\bmz_k$ encodes its underlying concepts in a separable manner. However, SRs can be information-rich while remaining conceptually entangled, making composition unreliable at the BS. To ensure that transmitted representations preserve compositional structure, we introduce a \emph{compositionality measure}, a metric that quantifies how faithfully a SR captures the semantics associated with its underlying concepts. This measure will serve as a design objective in our multiuser coordination framework in Section~\ref{eq_language_alignment}, incentivizing devices to transmit representations suitable for compositional reasoning. 
\vspace{-9.5mm}\subsection{Compositionality Measure}\vspace{-1mm}
For any SR $\bmz_k$, we quantify how faithfully it represents its underlying concepts using cosine similarity:
\vspace{-2mm}\begin{equation}
C(\bmz_k)= \frac{1}{D} \sum_{d=1}^{D} \left| \cos(\bmz_k, R(c_{k,d})) \right|.
\label{eq_cs_score}
\vspace{-2mm}\end{equation}
 We use cosine similarity in \eqref{eq_cs_score} because compositional SRs should be expressible as weighted combinations of concept vectors (Definition~\ref{CR}), and cosine similarity directly measures angular alignment regardless of magnitude. A high compositionality score ($\mathcal{C}(\bmz_k) \to 1$) indicates that $\bmz_k$ preserves clear compositional structure, enabling the BS to decompose it into constituent concepts for deductive reasoning. Conversely, a low score means that $\bmz_k$ entangles concepts in ways that complicate semantic interpretation and composition at the RX.

Composing semantics from heterogeneous devices raises two challenges that the rest of the paper addresses. First, we must rigorously quantify the task-relevant, compositional information each SR carries, which we develop in Section III. Second, we must coordinate devices whose contributions are interdependent, since each SR's value depends on how it complements others and on the BS's composition strategy, which we cast as a strategic game in Section IV.

\vspace{-4mm}\section{Compositional Semantic Information using Category Theory}\vspace{-1mm}
\label{CompositionalSemanticInformation}\vspace{-1mm}

Standard measures such as entropy and mutual information capture statistical dependence but not the compositional value an SR carries for downstream inference. We therefore develop a measure that jointly captures \emph{task relevance, compositionality, and semantic alignment}, built on category-theoretic constructs. Lenses \cite{steckermeier2015lenses} formalize the bidirectional encode-decode transformation with consistency guarantees under noisy channels, and fibrations  \cite{velarde2024fibration} let heterogeneous device semantics compose over a shared communication language without requiring identical representations.

\vspace{-5mm}\subsection{Lenses: Modeling Bidirectional Semantic Transformations}\vspace{-1mm}

In SC, two operations are required. The first operation, $g$, extracts semantics from data, which corresponds to encoding. The second operation, $p$, updates or generates data based on semantic extraction, which corresponds to decoding or reconstruction. These operations must be consistent, meaning that performing the get operation followed by the put operation should yield the original semantics, and vice versa. This motivates the use of lenses in category theory:
\vspace{-3mm}\begin{definition}\label{lens}
A \emph{lens} $L = (\mX, \mC, g, p, \alpha, \beta)$ consists of: (a)  the \emph{syntax category} $\mX$, representing the data or observation space; (b) the \emph{semantic category} $\mC$,  representing the semantic concept space; (c) the ``get" operation, a  mapping: $g: \mX \to \mC$ that performs semantic extraction; and (d) the ``put" operation, a mapping:  $p: \mX \times \mC \to \mX$ that performs semantic-guided data generation at the RX. For the put operation, the input consists of the current data element $x \in \mX$ and the target semantics $b' \in \mC$. The output is a new data element $x'$ that realizes the semantics $b'$. A key property of the lens is that $g(p(x, b')) = b'$, which ensures that the generated data has the desired semantics. (e) The magma structure $\alpha: B \otimes B \to B$ defines concept composition, and (f) the comagma structure $\beta: B \to B \otimes B$ defines concept decomposition.
\vspace{-3mm}\end{definition}\vspace{-0mm}

Definition~\ref{lens} formalizes encoding (get)  and decoding (put) with a consistency guarantee: $g(p(x, b')) = b'$, which means semantics are preserved in the encode-decode cycle. The composition ($\alpha$) and decomposition ($\beta$) operations enable devices to combine semantic concepts meaningfully at the RX without semantic loss.
However, errors in physical wireless channel and inconistencies in semantic understanding require defining a \emph{weak lens laws}. Weak lens law relaxes the strict lens laws by allowing approximate consistency. The first relaxed law,  $(L_1^\prime)$, states that $g(p(x, b)) = b$, which ensures that the put operation generates the correct semantics. The second relaxed law, $(L_2^\prime)$, requires that the semantic distance between $g(x)$ and $g(p(x, g(x)))$ is within a tolerance, expressed as $d_B(g(x), g(p(x, g(x)))) \leq \varepsilon$, where $d_B$ is a semantic distance metric in the semantic category $B$. This law enforces weak semantic consistency. The third relaxed law,  $(L3')$, states that $p(p(x, b_1), b_2) = p(x, b_2)$, which means that the last update takes precedence over previous updates. Next, we  illustrate via an example how get and put functions can be used.
\vspace{-1mm}\subsubsection{Scenario A- alter the semantic attribute of an image} Consider a reference image $x_{\text{ref}}$ that shows a red car. The SR of $x_{\text{ref}}$ is given by $\text{get}_k(x_{\text{ref}}) = \{\text{color: red}, \text{object: car}, \text{background: street}\}$. By changing the color to blue, we define a new semantic concept $c_{\text{new}} = \{\text{color: blue}, \text{object: car}, \text{background: street}\}$. The updated image is obtained by applying the put operation: $x_{\text{new}} \!=\! \text{put}_k(x_{\text{ref}}, c_{\text{new}})$, which results in an image of a blue car on the same street. Although $x_{\text{new}} \!\neq\! x_{\text{ref}}$, the SR satisfies $\text{get}_k(x_{\text{new}})\! = \!c_{\text{new}}$, preserving the structural composition of $x_{\text{ref}}$.

\subsubsection{Scenario B: Perform multi-device composition at BS}
In this example, a BS receives concepts from multiple devices and generates a unified representation. Device 1, which uses a camera, provides the concept $z_1 = \{\text{person, sitting}\}$. Device 2, which uses audio input, provides the concept $z_2 = \{\text{conversation, two people}\}$. Device 3, which uses motion sensing, provides the concept $z_3 = \{\text{stationary}\}$. The BS composes these semantics using the composition function $\alpha$, resulting in $c_{\text{composed}} = \eta(\bmz_1, \bmz_2, \bmz_3) = \{\text{two people, sitting, conversing, stationary}\}$.
Given a lens structure for a compositional concept, we next look at how to formalize $\eta$ of a lens structure for the composition of multiple concepts. Such a composed lens structure will be crucial to defining the concept of compositional semantic information.
\vspace{-2mm}\begin{lemma}
    Given a family of compositional concepts $\mC_1,\cdots,\mC_n, n \in \mN$, their combination defined by $\eta: \mC_1 \otimes \cdots \otimes \mC_n \rightarrow \mC_1 \otimes \cdots \otimes \mC_n$, forms a new lens structure: $\left(\bigcirc_{i=1}^n \mX_i,\bigcirc_{i=1}^n \mC_i,\bigcirc_{i=1}^np_i,\bigcirc_{i=1}^ng_i,\bigcirc_{i=1}^n\alpha_i,\bigcirc_{i=1}^n\beta_i \right)$.
\end{lemma}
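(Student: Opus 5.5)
The plan is to interpret $\bigcirc$ as the componentwise (product/tensor) construction. Concretely, I will take
\[
\bigcirc_{i=1}^n \mX_i=\mX_1\times\cdots\times\mX_n
\]
and
\[
\bigcirc_{i=1}^n \mC_i=\mC_1\otimes\cdots\otimes\mC_n,
\]
treating $\otimes$ on objects as the cartesian product of concept spaces. The get and put maps will then be defined componentwise:
\[
g(x_1,\dots,x_n)=(g_1(x_1),\dots,g_n(x_n)),
\qquad
p\bigl((x_i)_i,(b_i)_i\bigr)=\bigl(p_i(x_i,b_i)\bigr)_i .
\]
Similarly, the magma is $\alpha\bigl((b_i)_i,(b'_i)_i\bigr)=(\alpha_i(b_i,b'_i))_i$ and the comagma is $\beta((b_i)_i)=((\beta_i(b_i))_i)$. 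Both are composed with the canonical symmetry isomorphism that reorders $\bigotimes_i(B_i\otimes B_i)\cong(\bigotimes_i B_i)\otimes(\bigotimes_i B_i)$. Finally, the combination map $\eta$ is precomposed with $g$ (and postcomposed before $p$) so that the BS-level composition acts on the tensor concept space. The proof first checks that these are well-typed morphisms, which follows from functoriality of the product.

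The core step is verifying the lens laws of Definition~\ref{lens} componentwise. For the get–put law,
\[
g\bigl(p((x_i),(b_i))\bigr)=\bigl(g_i(p_i(x_i,b_i))\bigr)_i=(b_i)_i ,
\]
because each factor satisfies $g_i(p_i(x_i,b_i))=b_i$. For the weak laws I would do the same. $(L_1')$ and $(L_3')$ hold on each coordinate, hence on the tuple. For $(L_2')$ I will place a product metric on $\bigcirc_i\mC_i$, for instance
\[
d_B\bigl((b_i),(b'_i)\bigr)=\max_i d_{B_i}(b_i,b'_i).
\]
With this choice the tolerance is preserved as $\varepsilon=\max_i\varepsilon_i$; with a sum metric it becomes $\sum_i\varepsilon_i$. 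The proof then argues by induction on $n$: the case $n=1$ is trivial, and the composite of $n$ lenses is the binary product of the composite of $n-1$ lenses with the $n$-th. Associativity of the product up to canonical isomorphism makes the bracketing irrelevant.

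The step I expect to be the main obstacle is making $\eta$ compatible with the lens structure. $\eta$ is an arbitrary endomorphism of $\mC_1\otimes\cdots\otimes\mC_n$, so the lens $(g',p')=(\eta\circ g,\ p\circ(\mathrm{id}\times\eta^{-1}))$ only satisfies get–put if $\eta$ has a section. My first attempt would restrict to $\eta$ built from the magma maps $\alpha_i$ together with the comagma $\beta_i$, under the assumption $\alpha_i\circ\beta_i=\mathrm{id}$. This gives $\eta$ a right inverse, and put can be defined through it. If this fails, I would instead state the result for the componentwise lens and treat $\eta$ only as the magma operation on the composite semantic category. I would also check that $\alpha$ and $\beta$ stay compatible with $g$ and $p$ in the sense of $g$ being a magma morphism, which again reduces to the coordinatewise statements.
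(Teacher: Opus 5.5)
Your proof is correct under your reading of $\bigcirc$, but it takes a different route from the paper. The paper reads $\bigcirc$ as sequential composition. It assumes two lenses have commuting operations ($p_1\circ p_2=p_2\circ p_1$, $g_1\circ g_2=g_2\circ g_1$), cites Hefford et al.\ for the fact that $(\mX_1\circ\mX_2,\mC_1\circ\mC_2,p_1\circ p_2,g_1\circ g_2,\alpha_1\circ\alpha_2,\beta_1\circ\beta_2)$ is again a lens, and then inducts on $n$. You instead build the parallel (product) lens on $\prod_i\mX_i$ and $\bigotimes_i\mC_i$, and check get--put and $(L_1')$--$(L_3')$ coordinatewise.

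What your route buys: it is self-contained, and every map is well-typed, whereas the paper's $p_1\circ p_2$ only makes sense when the lenses share a syntax space. It also needs no hypothesis beyond the statement; the paper's argument quietly adds the pairwise-commuting assumption, which the lemma never states. The price is generality. Each of your lenses acts on its own factor $\mX_i$, so they cannot interfere, and you only cover independent devices. The commuting hypothesis matters when lenses act on shared data. For example, take the combined lens $x\mapsto(g_1(x),g_2(x))$ with put $(x,(b_1,b_2))\mapsto p_1(p_2(x,b_2),b_1)$. There get--put on the second component can fail unless the puts commute, or $p_1$ does not disturb $g_2$.

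Your worry about $\eta$ is not needed for the statement. The tuple in the conclusion does not involve $\eta$, and the paper's sketch never uses it, so your fallback is exactly what is claimed. If you do want to fold $\eta$ in, the natural assumption from the Karoubi-envelope setting (and the idempotence constraint later imposed on the correlator) is that $\eta$ is idempotent. Restrict the concept space to $\mathrm{im}\,\eta$, with get $\eta\circ g$ and the same put. This gives $\eta(g(p(x,b)))=\eta(b)=b$ for $b\in\mathrm{im}\,\eta$. By contrast, your condition $\alpha_i\circ\beta_i=\mathrm{id}$ does not by itself produce a section of $\eta$.
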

\vspace{-1mm}\begin{IEEEproof}[Proof Sketch]
    When two lens structures 
\((\mathcal{X}_1, \mathcal{C}_1, p_1, g_1, \alpha_1, \beta_1)\) 
and 
\((\mathcal{X}_2, \mathcal{C}_2, p_2, g_2, \alpha_2, \beta_2)\) 
have commuting operations 
\((p_1 \circ p_2 = p_2 \circ p_1 \text{ and } g_1 \circ g_2 = g_2 \circ g_1)\), 
their composition into
$
(\mathcal{X}_1 \circ \mathcal{X}_2,\;
\mathcal{C}_1 \circ \mathcal{C}_2,\;
p_1 \circ p_2,\;
g_1 \circ g_2,\;
\alpha_1 \circ \alpha_2,\;
\beta_1 \circ \beta_2)
$
yields a new lens structure that also satisfies Definition~1 
(proof in \cite{HeffordArxiv}). Therefore,  
by induction, any finite composition 
\(\bigcirc_{i=1}^n\) of lens structures with pairwise commuting operations 
forms a lens structure, enabling hierarchical concept composition across multiple devices.
\end{IEEEproof}

Using Lemma 1, we have established that concepts hierarchically compose through correlators $\eta$. However, organizing these compositional rules into a unified categorical framework requires more than correlator definitions alone. The Karoubi envelope provides this framework by constructing a category where concept spaces $\mC$ are paired with their correlators $\eta$, and morphisms between concepts respect the compositional constraints of both spaces. This categorical structure enables \emph{local compositionality} within each device's concept hierarchy, \emph{morphism preservation} ensuring compositions are consistent across concept transitions, and  \emph{hierarchical coherence} allowing compositions at one level (device) to influence compositions at higher levels (remote inference at BS).
\vspace{-1mm}\begin{definition}\label{Karoubi}\vspace{-2mm}

The \emph{Karoubi envelope} $\bar{\mathcal{C}}$ pairs each concept with a correlator (a constraint function):
$(\mC, \eta)$ in $\bar{\mathcal{C}}$ represents a semantic concept, where $\mC$ is base concept space and $\eta$ is a correlator defining relations among concepts. The morphisms $f : (\mC_1,\eta)\rightarrow (\mC_1,\sigma)$ are the morphisms $f:\mC_1 \rightarrow C_2$ such that $\sigma \circ f = f = f\circ \eta$.
\vspace{-2mm}\end{definition}
Definition~\ref{Karoubi} means that the Karoubi envelope pairs each concept with a correlator, which is a rule that defines how concepts hierarchically compose to form higher-level semantic structures for downstream inference. For example, basic concepts from camera (``vehicle ahead") and LIDAR (``distance$=5$~m", ``closing speed") do not directly produce an inference; instead, they must compose hierarchically: \{\text{``vehicle ahead", ``distance}$=5$~m", \text{``closing speed"}\} $\rightarrow$ {\text{``approaching vehicle danger"}} $\rightarrow$ \text{inference ``activate collision avoidance."} The correlator $\eta$  encodes these compositional relationships, and the projection function $p(c) = \{c' \in \mC \mid \eta(c) = \eta(c')\}$ groups concepts that participate in the same hierarchical composition pathways. This is essential in multi-device SC: when camera and LIDAR transmit different basic concepts, the correlator ensures they are correctly combined through compositional pathways that lead to meaningful, task-relevant inferences. 

Having formalized compositional hierarchies via the Karoubi envelope and its correlators, we now introduce a novel information-theoretic measure, $\mathbb{S}(\bmz)$, that captures compositional semantic information by accounting for both individual concept contributions and their composition through pathways defined by the Karoubi correlator.
\vspace{-4mm}\subsection{Compositional Semantic Information}\vspace{-1mm}
Let  $\bar{\mC}$ be the Karoubi envelope and $\mathbb{P}(\mC)$ the power set for   a set of concepts $\mC$, with projection map $p:\bar{\mC}\to \mathbb{P}(\mC)$. We define task-relevant compositional objects as:
\vspace{-3mm}\begin{equation}
\hat{\mathcal{P}}(\mC) = \{ B \subseteq \mC : \exists c \in \bar{\mC}, \; p(c) = B, \; \mathbb{S}(c; B) > 0 \},\vspace{-3mm}
\end{equation}
where $\mathbb{S}(c; B)$ is the intrinsic information (see Definition 4) measuring the contribution of concept $c$ to compositional object $B$. This restriction ensures we only consider compositions that form valid projections in the Karoubi envelope, and
carry non-zero information about the inference task.
\vspace{-3mm}\begin{definition}
 
The \emph{compositional semantic information} conveyed by a representation $\bmz$ is
\vspace{-2mm}\begin{equation}
\mathbb{S}(\bmz)
=
\sum_{B \in \mathbb{P}(\mC)}
\sum_{c \in \bar{\mC}}
P(c \mid \bmz)\,\mathbb{S}(c; B)\,\mathds{1}\!\left(p(c)=B\right).\label{eq_compositional_info}\vspace{-2mm}
\end{equation}
\end{definition}
Here $\mathcal{P}(\mathcal{C})$ is the power set of $\mathcal{C}$, enumerating all feasible compositional objects $B$; each object in $\bar{\mathcal{C}}$ pairs a concept $c$ with a correlator specifying how its constituents compose. The posterior $P(c \mid \bm{z}) = P(\bm{z} \mid c) P(c) / P(\bm{z})$ weights each concept by its likelihood under $\bm{z}$, capturing that $\bm{z}$ may encode concepts probabilistically. The indicator $\mathbf{1}(p(c) = B)$ retains only concepts whose categorical projection matches $B$, enforcing consistency between projection and semantic composition. The term $S(c; B)$ is the intrinsic information that $c$ contributes to forming $B$ when composed with other concepts, defined as\footnote{In practice, $P(c \mid \bm{z})$ and $p(B \mid c, c', \pi)$ are approximated by learned energy functions $E_\psi$ and $E_\phi$; see [34].}
\vspace{-2mm}{\beq
\begin{aligned}\small
 &\mathbb{S}(c; B) =\\ & \sum_{\pi \in \Pi} P(\pi) \!\!\!\sum_{c' \in \mC\setminus c} P(c' \mid c, \pi) \big[ - \ln p(B \mid c, c', \pi) \big] p(B \mid c, c', \pi),
\end{aligned}
\vspace{-2mm}\eeq}
where,
$\Pi$ is the set of possible correlators (composition rules), with $\sum\limits_{\pi \in \Pi} P(\pi) = 1$. Each correlator $\pi$ encodes a distinct pathway for how $c$ and other concepts hierarchically combine to form $B$, $P(c' \mid c, \pi)$ is the conditional probability of co-concept $c'$ appearing alongside $c$ under correlator $\pi$. This averages over all possible co-concept configurations compatible with the compositional structure $\pi$ defines, $p(B \mid c, c', \pi)$ is the conditional probability that compositional object $B$ is formed when concepts $c$ and $c'$ compose under correlator $\pi$, The inner summation over $c' \in \mC\setminus c$ marginalizes over all possible co-concept configurations, capturing how $c$'s contribution changes depending on which other concepts it composes with. $\mathbb{S}(c; B)$ quantifies the expected information gain from knowing concept $c$ when forming $B$, averaged over all possible correlators (hierarchical pathways) and all possible synergistic co-concepts. High intrinsic information means $c$ is consistently informative for B across different composition pathways; low intrinsic information means $c$'s contribution is context-dependent or redundant.

The compositional semantic information conveyed by $\bmz$ about $\bmy$ can be obtained as follows.
\vspace{-2mm}\beq
\begin{aligned}
&\mathbb{S}(\bmy; \bmz) = \mathbb{E}_{p(\bmzh\mid\bmz)}\mathbb{S}(\bmzh\mid \bmz) \\= & \mathbb{E}_{p(\bmzh\mid\bmz)}\left[\sum\limits_{\widehat{c}} p(\widehat{c}\mid \bmzh) \sum\limits_{\bmy \in \mathbb{P}(\mC)} \sum\limits_{c\in \bar{\mC}}\mathbb{S}(\widehat{c};\bmy) p (p(\widehat{c})=\bmy)\right]  \\=
& \underbrace{\mathbb{E}_{p(\bmzh\mid\bmz)}\left[\sum\limits_{\widehat{c}} p(\widehat{c}\mid \bmzh)  \sum\limits_{c\in \bar{\mC}}\mathbb{S}(\widehat{c};\bmy) p (p(\widehat{c})=\bmy)\right]}_{\mbox{Task relevant information}}  \\ +&\underbrace{\mathbb{E}_{p(\bmzh\mid\bmz)}\left[\sum\limits_{\widehat{c}} p(\widehat{c}\mid \bmzh) \sum\limits_{\bmy^{\prime} \in \mathbb{P}(\mC) \setminus \bmy} \sum\limits_{c\in \bar{\mC}}\mathbb{S}(\widehat{c};\bmy^{\prime}) p (p(\widehat{c})=\bmy^{\prime})\right]}_{\mbox{                Generalization potential (off-task concepts)
}}
\label{eq_Sy_z}
\end{aligned}
\vspace{-2mm}\eeq
The first term in \eqref{eq_Sy_z} represents the information contained in $z$ that is directly relevant to inferring $y$. The second term represents the information that enables generalization to other tasks captured by $y^{\prime}$. This decomposition plays a crucial role in multi-task learning for physical AI systems.

\vspace{-4mm}\subsection{Grothendieck Topology: Ensuring Semantic Consistency
}\vspace{-1mm}
Next, we establish  consistency guarantees at the representation level. When device $k$ transmits representation $\bmz_k \in \mathbb{R}^m$, it encodes its local concepts $\mathcal{C}_k$ using a device-specific semantic space. The BS receives $K$ heterogeneous representations ${\bmz_1, \cdots, \bmz_K}$ from distinct semantic spaces and must compose them to form compositional objects B that contribute to inference task $\bmy$. This challenge of \emph{semantic consistency in the representation space} 
can be addressed by imposing a Grothendieck topology on the representation space.

\vspace{-2mm}\begin{definition}
Let $\mathcal{C}_{\text{repr}}$ be a category whose objects are device semantic spaces $\mZ_1, \mZ_2, \dots, \mZ_K$ and the BS inference target space $\mY$, and morphisms are $f: \mZ_k \to \mZ_j$ that represent semantic alignment/consistency maps showing that concepts from device $k$ can be reliably interpreted in device $j$'s semantic framework.
A morphism $f: \mZ_k \to \mY$ indicates that representation $z_k \in \mZ_k$ can directly contribute to task $y \in \mY$.
A \emph{presieve} $P$ on $\mY$ is a collection of morphisms $\{ f_k: \mZ_k \to \mY \mid k \in \{1, \dots, K\} \}$ representing all devices that could potentially contribute to task $y$.
\vspace{-2mm}\end{definition}
For example, if $c$ represents  concept ``vehicle”, then the presieve $P$ may include morphisms such as $\{\text{car} \to \text{vehicle}, \text{truck} \to \text{vehicle}, \text{bicycle} \to \text{vehicle}, \dots\}$. 
However, to capture compositional concepts, we  use the concept of sieve of task-relevant representations, defined next.
\vspace{-2mm}\begin{definition}
A \emph{sieve} $S_{y}$ on $\mY$ is a presieve satisfying closure under composition. This means that if $f: \mZ_k \to \mY$ is in $S_{y}$, and $g: \mZ_j \to \mZ_k$ is a consistency morphism, then $f \circ g: \mZ_j \to \mY$ is in $S_{y}$.
\vspace{-1mm}\end{definition}
If device $k$'s representation $\bmz_k$ can contribute to $y$, and device $j$'s representation $z_j$ is semantically consistent with $\bmz_k$ (morphism $g$ exists), then $z_j$ can also contribute to $y$ through composition.
Further, we define the concept of Grothendieck topos \cite{Grothendieck1972} that provides a framework to capture all the compositional semantics represented by any concept, either single or a composed $c$.

\vspace{-2mm}
\begin{definition}
\label{Grothendieck_topos}
\emph{A Grothendieck topology} on a category $\mathcal{C}$ is a rule $J$ that assigns to each object $c \in \mathcal{C}$ a collection $J(c)$ of sieves on $c$ such that:
\begin{itemize}
    \item \textbf{(Maximality Axiom)} The maximal sieve 
    $M_c = \{ f : d \to c \mid d \in \mathrm{Ob}(\mathcal{C}) \}$ 
    is in $J(c)$.

    \item \textbf{(Stability Axiom)} If $S \in J(c)$ and $f : d \to c$, then the pullback sieve
    $
    f^*(S) = \{ g : e \to d \mid f \circ g \in S \}
    $
    is in $J(d)$.

    \item \textbf{(Transitivity Axiom)} If $S \in J(c)$ and $R$ is a sieve on $c$ such that
    $
    f^*(R) \in J(d) \quad \text{for all } f : d \to c \text{ in } S,
    $
    then $R \in J(c)$.
\end{itemize}
\vspace{-3mm}\end{definition}
This means that a Grothendieck topology ensures that semantic interpretations are complete, consistent under context changes, and compositionally closed. 
\vspace{-2mm}\begin{theorem}
If transmitted representations $\{\bmz_k\}_{k \in \mathcal{K}}$ satisfy the Grothendieck topology consistency conditions of Definition~\ref{Grothendieck_topos}  on representation space $\mathcal{C}_{\text{repr}}$, then:
\begin{enumerate}
    \item \textbf{Consistency:} The composed representation $\bmz = \bigoplus_{k}\bmz_k$ is uniquely interpretable at the BS (no semantic ambiguity).
    \item \textbf{Robustness to device dynamics:} If a device $k$ leaves the system, the sieve $S_y$ shrinks but remains well-defined (Stability Axiom), so the remaining devices' representations still compose consistently.
    \item \textbf{Multi-hop task relevance:} If $\bmz_k \to \bmz_{\text{composed}} \to \bmy$ ($\bmz_k$ contributes through intermediate composition), then $\bmz_k$ is task-relevant (Transitivity Axiom).
\end{enumerate}
\end{theorem}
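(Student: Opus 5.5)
The plan is to treat the three claims separately. Each one is a direct translation of one ingredient of Definition~\ref{Grothendieck_topos}, read inside the category $\mathcal{C}_{\text{repr}}$. The first step is to fix the interpretation. The composed representation $\bmz=\bigoplus_k \bmz_k$ is the family $\{\bmz_k\}$ indexed by the covering sieve $S_y\in J(\mY)$, with structure morphisms $f_k:\mZ_k\to\mY$. "Interpretable at the BS" means the family determines an element of the target space $\mY$ through these morphisms. Concretely, this is a matching family for the presheaf $\mathrm{Hom}(-,\mY)$, or for the representation presheaf $F$ with $F(\mZ_k)\ni \bmz_k$.

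\textbf{Consistency.} The key assumption I would make explicit is that the hypothesis "$\{\bmz_k\}$ satisfy the consistency conditions" includes compatibility on overlaps. That is, for every consistency morphism $g:\mZ_j\to\mZ_k$ we have $F(g)(\bmz_k)=\bmz_j$, together with the sheaf condition for $J$.

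The argument then runs as follows.
\begin{enumerate}
\item By maximality, or because $S_y\in J(\mY)$, the sieve covers $\mY$.
\item The compatible family $\{\bmz_k\}$ therefore amalgamates to a unique global section $\bmz\in F(\mY)$.
\item Uniqueness of the amalgamation is exactly "no semantic ambiguity". Two different interpretations would restrict to the same $\bmz_k$ on every member of the cover, so they coincide.
\end{enumerate}

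Where the composition is the weighted sum of Definition~\ref{CR}, I would also check that this amalgamation is $\sum_k w_k R(\cdot)$ evaluated on the aligned images $f_k(\bmz_k)$.

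\textbf{Robustness to device dynamics.} I would model the departure of device $k$ by the inclusion functor of the subcategory omitting $\mZ_k$. The new sieve is $S_y'=\{f\in S_y : \operatorname{dom} f\neq \mZ_k \text{ and } f \text{ does not factor through } \mZ_k\}$. Closure under precomposition is inherited from $S_y$, so $S_y'$ is still a sieve, and it is clearly contained in $S_y$. I would then argue that it remains in the topology, using stability. For each remaining $f_j:\mZ_j\to\mY$, the pullback $f_j^*(S_y)$ lies in $J(\mZ_j)$. Restricting to the surviving devices, these local covers are unchanged. Transitivity then gives $S_y'\in J(\mY)$, provided the remaining morphisms still form a cover. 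By the consistency item above, the surviving family again amalgamates uniquely.

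\textbf{Multi-hop task relevance.} Suppose $\bmz_k\to\bmz_{\text{composed}}\to\bmy$, that is, there are morphisms $g:\mZ_k\to\mZ_c$ and $h:\mZ_c\to\mY$ with $h\in S_y$. Sieve closure under composition already puts $h\circ g$ in $S_y$, so $\bmz_k$ is task-relevant. The role of the transitivity axiom is that the sieve generated by all such composites is itself covering: its pullback along every $h\in S_y$ contains a covering sieve of $\mZ_c$.

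The main obstacle is the second item. The stability axiom only controls pullbacks along arrows, not the deletion of objects. So the claim that $S_y'$ is still covering is not automatic, since it could fail if $\mZ_k$ was essential to the cover. I would first handle this by adding the mild hypothesis that the cover is redundant, meaning every $f_k$ factors through some other $f_j$ via a consistency morphism. Under that hypothesis the transitivity argument above goes through. Without it, one can still prove the weaker statement that $S_y'$ is a well-defined sieve, and that the remaining family is compatible on its restricted cover.
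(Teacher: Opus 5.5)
Your route is genuinely different from the paper's. Appendix~\ref{Proof_Theorem1} reads each axiom directly as one of the three properties:
maximality as ``the sieve contains every device'', giving completeness and hence no ambiguity;
stability as ``consistency with a subset of devices persists for any smaller subset'';
and transitivity as closure along multi-hop paths.
No presheaf, gluing condition or extra hypothesis appears. You instead make the missing ingredients explicit.

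For item 1, uniqueness of interpretation is the separatedness and gluing part of a sheaf condition on a representation presheaf $F$. Definition~\ref{Grothendieck_topos} constrains sieves, not elements: a presheaf that is not separated for $S_y$ has two distinct elements of $F(\mY)$ with identical restrictions to every $\mZ_k$. So your strengthened hypothesis is genuinely needed. The same holds for $S_y\in J(\mY)$ itself, so drop ``by maximality'': maximality only makes $M_{\mY}$ covering, and $S_y=M_{\mY}$ would force $\mathrm{id}_{\mY}$ to factor through some $\mZ_k$. Likewise, identifying the amalgamation with $\sum_k w_k R(\cdot)$ would need a further link between $F$ and Definition~\ref{CR}.

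For item 3 you correctly note that closure of $S_y$ under precomposition already gives $h\circ g\in S_y$, so transitivity is not needed.

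For item 2 your diagnosis is right and exposes a weak point of the paper's argument. Stability concerns pullbacks along arrows, and covering sieves are closed upward but not downward, so the paper's ``smaller subset'' reading of stability is not what the axiom says. The trade-off is that the paper's argument is short but asserts rather than derives each property, while yours gives a rigorous proof of a conditional version of the theorem.

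One step of your item 2 fails as written, however. Your set $S_y'=\{f\in S_y:\operatorname{dom} f\neq\mZ_k,\ f\text{ does not factor through }\mZ_k\}$ is not closed under precomposition. Non-factorization through $\mZ_k$ is not preserved by composing on the right. Suppose $g:\mZ_i\to\mZ_j$ and $u:\mZ_i\to\mZ_k$ satisfy $f_j\circ g=f_k\circ u$, and there is no arrow $\mZ_j\to\mZ_k$. Then $f_j\in S_y'$ but $f_j\circ g\notin S_y'$, even though $g$ lies in the surviving subcategory. So even your ``weaker statement'' is not established. Take instead $S_y'$ to be the sieve generated by $\{f_j\}_{j\neq k}$, which is automatically a subsieve of $S_y$.

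Your transitivity step also does not go through for your $S_y'$. Using $S_y$ as the witnessing cover, you would need $f_k^*(S_y')\in J(\mZ_k)$. That pullback is empty, since $S_y'$ excludes everything passing through $\mZ_k$. The pullbacks $f_j^*(S_y)$ you invoke are maximal sieves anyway, because $f_j\in S_y$, so stability contributes nothing there.

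Under your redundancy hypothesis no transitivity is needed at all. Take $S_y$ to be the sieve generated by the $f_k$, and suppose $f_k=f_j\circ g$ with $j\neq k$. Then every $f_k\circ v=f_j\circ(g\circ v)$ lies in the sieve generated by the survivors, so that sieve equals $S_y$ and is covering outright. Working with this generated sieve inside $\mathcal{C}_{\text{repr}}$ also avoids having to put a topology on the subcategory that omits $\mZ_k$.
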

\vspace{-2mm}\begin{IEEEproof}
 See Appendix~\ref{Proof_Theorem1}.
\end{IEEEproof}
The Grothendieck topology at the representation level is induced by Karoubi correlators at the concept level. Specifically, consistency morphisms $g: \mathcal{Z}_j \rightarrow \mathcal{Z}_k$
 exist if and only if correlators $\eta_j$
and $\eta_k$
 are compatible—meaning they decompose representations into alignable concept hierarchies. By Theorem 1, this ensures that compositional semantics at the concept level (Karoubi envelope) induce semantic consistency at the representation level (Grothendieck topology). However, the Grothendieck framework assumes these morphisms already exist and does not specify how to construct them when device representations are initially incomparable. We address this gap by introducing fibrational encoders, which explicitly construct morphisms by mapping device-specific representations to a shared communication vocabulary $\mathcal{L}$
.

\vspace{-4mm}\subsection{Fibrational Communication Language}\vspace{-1mm}

\begin{figure}[t]
\vspace{-4mm}\centerline{\includegraphics[width=0.8\columnwidth]{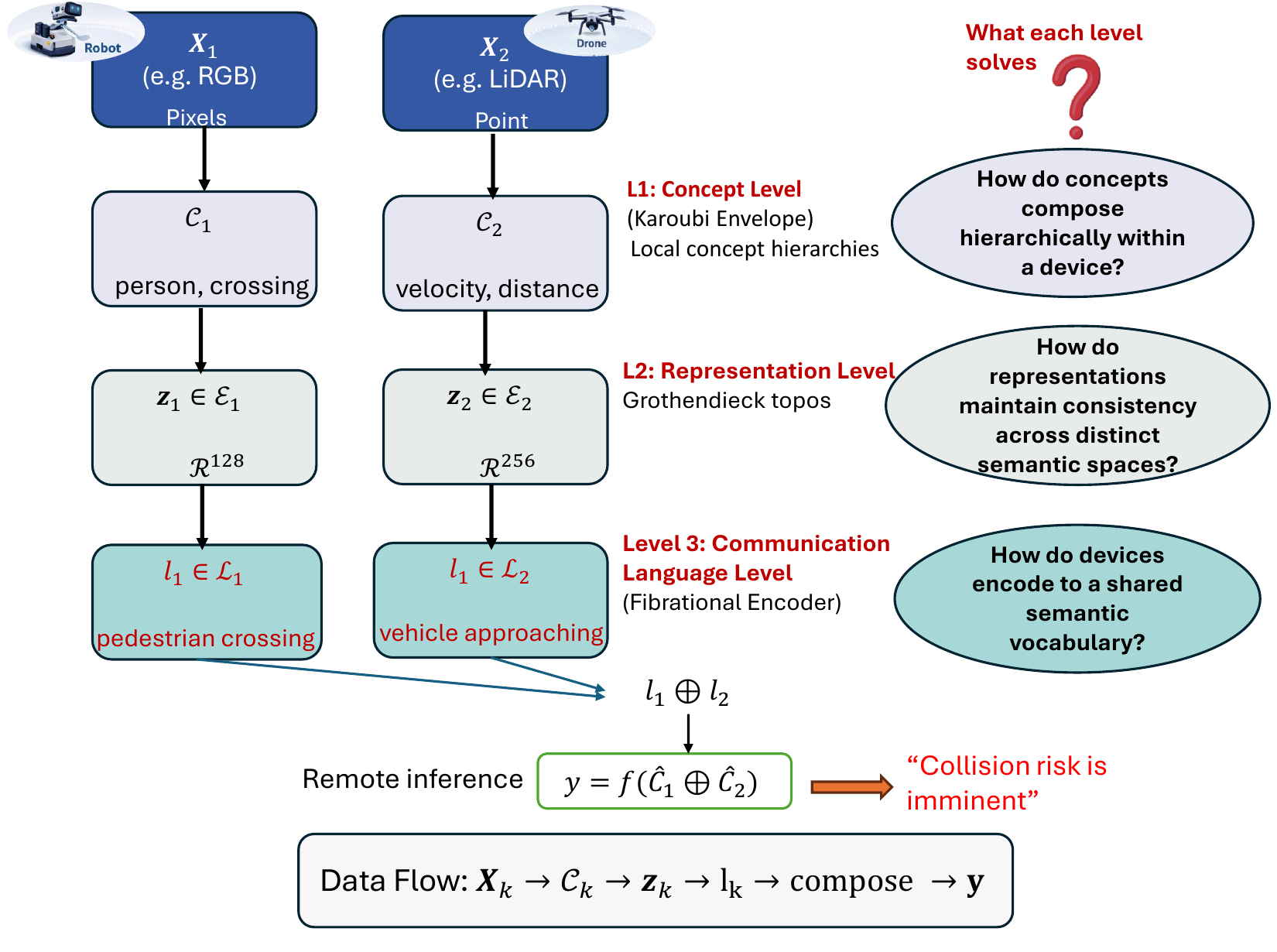}}\vspace{-3mm}
\caption{\small Hierarchical semantics via Fibrational semantic language.}
\label{BlockDiagram}\vspace{-2mm}
\vspace{-0mm}
\end{figure}\vspace{-0mm}

The CSC framework operates across three conceptually distinct layers, each operating on different data structures. The transformation from raw data to inference follows the pipeline: $\bmX_k$ (raw observations) $\rightarrow \mC_k$ (extracted concepts) $\rightarrow\bmz_k$ (representation in latent space $\mZ_k$) $\rightarrow$ $\ell_k$ (semantic symbol in $\mL$). At the \emph{concept level}, the Karoubi envelope  formalizes how semantic concepts hierarchically compose within each device's local semantic space through paired structures of concepts and correlators that define compositional pathways. At the \emph{representation level}, the Grothendieck topology  formalizes how SRs from heterogeneous devices maintain consistency despite living in different latent spaces, ensuring that when representations compose at the BS they do so without semantic ambiguity. In this section, we focus on the \emph{communication language level}, and use fibrations to formalize how heterogeneous local representations are encoded into a shared communication vocabulary that enables devices to transmit information interpretable by the BS without requiring device-specific alignment protocols.

A communication language is defined as a category $\mL$ where objects are  semantic symbols $\ell_1,\ell_2,\cdots,\ell_M$, representing unique and non-redundant concepts such as person, sitting, approaching, and morphisms $\ell_i\rightarrow \ell_j$ encode semantic relationships (e.g., person $\rightarrow$ entity or approaching $\rightarrow$ moving). The communication language functions as a universal semantic vocabulary shared across all devices, where each symbol $\ell \in \mL$ represents a concept with fixed semantics that all devices agree upon. This shared vocabulary provides the common reference frame necessary for multi-device semantic alignment without requiring explicit protocol negotiation.

\vspace{-2mm}\begin{definition}

For device $k$ with latent representation space $\mZ_k$, a \emph{fibrational encoder} is a functor $p^{\#}_k : \mZ_k \to \mL$.
This functor maps SRs $\bmz_k \in \mZ_k$ to semantic symbols $\ell \in \mL$, and preserves morphisms: if $\varphi : \bmz_k \to \bmz'_k$ is a morphism in $\mZ_k$ representing a semantic transformation within device $k$'s space, then $p^{\#}_k$ preserves structure by mapping this morphism to $
p^{\#}_k(\varphi) : p^{\#}_k(\bmz_k) \to p^{\#}_k(\bmz'_k) \text{ in } \mL.$
\end{definition}
The fibrational encoder thereby converts device $k$'s local representation into a semantic symbol understood by the BS and other devices. 
In practical implementation, if device $k$ wants to transmit SR $\bmz_k \in \mathbb{R}^m$, the encoder classifies $\bmz_k$ into one of the $M$ semantic symbols through 
\vspace{-2mm}\beq
p^{\#}_k(\bmz_k) = \arg \max_{\ell \in L} p(\ell \mid \bmz_k),
\vspace{-1mm}\eeq
where $p(\ell \mid \bmz_k)$ is a learned probability distribution over $\mL$.
\vspace{-2mm}\begin{definition}
    The \emph{fiber} of a symbol $\ell \in \mL$ under $p^{\#}_k$ is the set 
$
(p^{\#}_k)^{-1}(\ell) = \{ \bmz_k \in \mZ_k : p^{\#}_k(\bmz_k) = \ell \},$
which contains all SR from device $k$ that map to the same semantic symbol $\ell$. 
\vspace{-2mm}\end{definition}
The fibrational structure on $\mZ_k$ over $\mL$ is the collection of all fibers 
$
\{ (p^{\#}_k)^{-1}(\ell) : \ell \in \mL \},
$
partitioning $\mZ_k$ into equivalence classes, one per symbol in $\mL$. 
A key property termed \textit{fiber consistency} states that if $\bmz_k$ and $\bmz'_k$ both belong to $(p^{\#}_k)^{-1}(\ell)$, then any morphism $\bmz_k \to \bmz'_k$ in $\mZ_k$ represents semantically synonymous transformations within the same equivalence class $\ell$
, ensuring that the fiber structure respects the semantic organization of $\mZ_k$.
When $K$ devices transmit representations $\{ \bmz_1, \dots, \bmz_K \}$, each device $k$ encodes 
$
\bmz_k \mapsto p^{\#}_k(\bmz_k) = \ell_k \in \mL.$
The BS receives symbols $\{ \ell_1, \dots, \ell_K \} \subseteq \mL$ and must compose these symbols to infer $\bmy$. The fundamental challenge is ensuring that the composition 
$
\{ \ell_1 \otimes \ell_2 \otimes \dots \otimes \ell_K \}
$
is semantically consistent, meaning the combination respects the underlying meanings of each device's representation and the task-relevant relationships encoded in the Grothendieck topology. 
This consistency is achieved through \textit{fiber products}, which provide the categorical construction for combining aligned data in a semantically coherent way.
\vspace{-2mm}\begin{lemma}\label{lemma_fiberproduct}
For $K$ devices with fibrational encoders $p_k^{\#}: \mathcal{Z}_k \to \mathcal{L}$, $k \in \{1,\ldots,K\}$, define the fiber product as
\begin{align}
\mathcal{Z}_{\{1\ldots K\}} &= \mathcal{Z}_1 \times_{\mathcal{L}} \mathcal{Z}_2 \times_{\mathcal{L}} \cdots \times_{\mathcal{L}} \mathcal{Z}_K \nonumber \\
&= \{(z_1,\ldots,z_K) : z_k \in \mathcal{Z}_k,\ p_1^{\#}(z_1) = \cdots = p_K^{\#}(z_K)\},
\end{align}
with projections $\pi_k : \mathcal{Z}_{\{1\ldots K\}} \to \mathcal{Z}_k$. Then the following hold:
\begin{enumerate}
    \item \textbf{Semantic alignment:} Any tuple $(z_1,\ldots,z_K) \in \mathcal{Z}_{\{1\ldots K\}}$ represents a mutually aligned $K$-tuple in which all devices' representations map to the same canonical symbol $\ell \in \mathcal{L}$, so the BS interprets them as referring to a common underlying concept.
    \item \textbf{Universal property:} For any compatible tuple $\{z_k \in \mathcal{Z}_k\}_{k=1}^{K}$ satisfying $p_1^{\#}(z_1) = \cdots = p_K^{\#}(z_K)$, there exists a unique element $(z_1,\ldots,z_K) \in \mathcal{Z}_{\{1\ldots K\}}$ projecting onto each $z_k$ via $\pi_k$.
    \item \textbf{Closure under composition:} If $(z_1,\ldots,z_K), (z_1',\ldots,z_K') \in \mathcal{Z}_{\{1\ldots K\}}$ and fiber-preserving morphisms $z_k \to z_k'$ exist in each $\mathcal{Z}_k$, the composed tuple remains in the fiber product over the composed symbol in $\mathcal{L}$.
\end{enumerate}
\end{lemma}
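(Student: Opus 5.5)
The plan is to treat $\mathcal{Z}_{\{1\ldots K\}}$ as the iterated pullback of the cospan $\mathcal{Z}_1 \xrightarrow{p_1^{\#}} \mathcal{L} \xleftarrow{p_2^{\#}} \cdots \xleftarrow{p_K^{\#}} \mathcal{Z}_K$ and derive each of the three properties from the defining equalizer condition $p_1^{\#}(z_1)=\cdots=p_K^{\#}(z_K)$. We work first at the level of objects, as the set-theoretic definition in the statement does, and then lift to morphisms using the functoriality of each fibrational encoder $p_k^{\#}$. We also fix the projections $\pi_k(z_1,\ldots,z_K)=z_k$ and the canonical map $q=p_k^{\#}\circ\pi_k:\mathcal{Z}_{\{1\ldots K\}}\to\mathcal{L}$. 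This map is independent of $k$ precisely because of the defining condition, so the square $p_k^{\#}\circ\pi_k=p_j^{\#}\circ\pi_j$ commutes for all $j,k$.

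\textbf{Semantic alignment} is then immediate. A tuple in $\mathcal{Z}_{\{1\ldots K\}}$ has all $p_k^{\#}(z_k)$ equal to a single $\ell=q(z_1,\ldots,z_K)$, so each $z_k$ lies in the fiber $(p_k^{\#})^{-1}(\ell)$. By the fixed-semantics convention for symbols of $\mathcal{L}$, the BS reads all components as the common concept $\ell$.

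\textbf{Universal property.} Given a compatible family $\{z_k\}$, the tuple $(z_1,\ldots,z_K)$ belongs to $\mathcal{Z}_{\{1\ldots K\}}$ by definition, and $\pi_k$ returns $z_k$. Uniqueness follows because an element of a subset of the product $\prod_k\mathcal{Z}_k$ is determined by its coordinates. To match the categorical statement, we will phrase this more generally: for any object $W$ with maps $h_k:W\to\mathcal{Z}_k$ satisfying $p_k^{\#}h_k=p_j^{\#}h_j$, the map $w\mapsto(h_1(w),\ldots,h_K(w))$ is the unique factorization through the $\pi_k$. We will also note that the $K$-fold object agrees with the iterated binary pullbacks $(\cdots(\mathcal{Z}_1\times_{\mathcal{L}}\mathcal{Z}_2)\times_{\mathcal{L}}\cdots)\times_{\mathcal{L}}\mathcal{Z}_K$. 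This follows by induction on $K$, using the standard pasting of pullback squares, and it justifies the notation in the statement.

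\textbf{Closure under composition.} Take morphisms $\varphi_k:z_k\to z_k'$ in each $\mathcal{Z}_k$. Functoriality gives $p_k^{\#}(\varphi_k):\ell\to\ell'$ in $\mathcal{L}$, where $\ell$ and $\ell'$ are the common symbols of the two tuples. The hypothesis that the $\varphi_k$ are fiber-preserving must be read as requiring these images to coincide in $\mathcal{L}$, i.e.\ $p_1^{\#}(\varphi_1)=\cdots=p_K^{\#}(\varphi_K)=:\psi$. Under that reading, $(\varphi_1,\ldots,\varphi_K)$ is a morphism of the fiber product lying over $\psi$. Composition is componentwise, $(\varphi_k'\circ\varphi_k)_k$, and functoriality gives $p_k^{\#}(\varphi'_k\circ\varphi_k)=\psi'\circ\psi$ for every $k$. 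The composite therefore stays in $\mathcal{Z}_{\{1\ldots K\}}$ over the composed symbol morphism, and identities are handled the same way.

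I expect this third step to be the main obstacle. It is not a computational difficulty: the statement is loose about what ``fiber-preserving'' and ``composed symbol'' mean, and the whole step hinges on that. My plan is to make the reading above explicit, namely that the $\varphi_k$ all lie over the same morphism of $\mathcal{L}$. Without that condition the claim can fail, since different devices could send their morphisms to different arrows $\ell\to\ell'$. If the intended meaning is instead that the $\varphi_k$ stay within a single fiber, then $\ell=\ell'$, $\psi=\mathrm{id}_\ell$, and closure reduces to the fiber-consistency property stated before the lemma.
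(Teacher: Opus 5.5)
Your arguments for semantic alignment and the universal property are the same definitional arguments the paper gives. The paper does $K=2$ and inducts; your direct $K$-tuple version, with the factorization through an arbitrary $W$, is fine.

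The problem is part 3: you prove closure under a different operation from the one the paper intends. In the paper, ``composition'' means combining objects with the same $\otimes$ used just before the lemma, where the BS forms $\ell_1\otimes\cdots\otimes\ell_K$. The two given aligned tuples, over $\ell$ and $\ell'$, are combined componentwise into $(z_1\oplus z_1',\ldots,z_K\oplus z_K')$. The claim is that this tuple lies in the fiber product over the composed symbol $\ell\otimes\ell'$, which is an object of $\mathcal{L}$. Your reading, sequential composition $\varphi_k'\circ\varphi_k$, does not fit the statement. It needs a third tuple $(z_1'',\ldots,z_K'')$ and a second family of morphisms that the statement never supplies. It also ends with a composite arrow $\psi'\circ\psi$ rather than a symbol. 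What you show is essentially that a pullback of categories is again a category. That is true under your added hypothesis, but it is not the claim.

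The missing ingredient is that each encoder respects composition, $p_k^{\#}(z\oplus z')=p_k^{\#}(z)\otimes p_k^{\#}(z')$, i.e.\ $p_k^{\#}$ is a (strict) monoidal functor or magma homomorphism. This is exactly the condition the paper invokes (``the fibrational structure is preserved when \ldots''). With it the step is one line: $p_k^{\#}(z_k\oplus z_k')=\ell\otimes\ell'$ for every $k$, so the composed tuple satisfies the defining equation of the fiber product over $\ell\otimes\ell'$. In this argument the morphisms $z_k\to z_k'$ play no role. Functoriality, the only property of $p_k^{\#}$ you use, cannot substitute for monoidality. Without it one can have $p_1^{\#}(z_1)=p_2^{\#}(z_2)$ and $p_1^{\#}(z_1')=p_2^{\#}(z_2')$, yet $p_1^{\#}(z_1\oplus z_1')\neq p_2^{\#}(z_2\oplus z_2')$. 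You were right that part 3 needs a hypothesis not in the statement. The one required is this monoidality condition, not that the $\varphi_k$ lie over a common arrow of $\mathcal{L}$.
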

\vspace{-2mm}\begin{IEEEproof}
See Appendix~\ref{proof_lemma_fiberproduct}.
\end{IEEEproof}
Lemma~\ref{lemma_fiberproduct} means that at the BS, we can compose two semantic languages from distinct users and create a composed semantic language, which is the RX semantic interpretation space. 
Section III formalized \textit{optimal composition} using category theoretic constructs such as lenses, Grothendieck topos and fibrations. 
Next, we ask: How do devices and the BS strategically coordinate to achieve these goals?

\vspace{-4mm}\section{Computing SR and Composition Through Multiuser Language Alignment}
\label{eq_language_alignment}
Using the category-theoretic measures of Section III, we now formulate the multi-user SC problem as a strategic game between the transmitting devices and the BS. The goal is to coordinate heterogeneous devices so that the SRs $\bm{z}_k$, once composed at the RX, maximize downstream inference quality about $\bm{y}$ while minimizing communication overhead. Composition is realized through the BS correlator $n_\phi : \mathcal{C}_1 \otimes \cdots \otimes \mathcal{C}_K \to \mathcal{C}_{\text{total}}$ followed by a put operation,
$
\bm{y} = \text{put}_\psi\big(\emptyset, n_\phi(\hat{\mathcal{C}}_1, \ldots, \hat{\mathcal{C}}_K)\big),$
where $\hat{\mathcal{C}}_i$ is reconstructed from the received symbol $\hat{\ell}_i$. This generative form makes explicit that the BS synthesizes novel information from the composed concepts rather than merely reconstructing transmitted data. The problem is strategic because no device can optimize in isolation, which induces the Stackelberg structure formalized below, with devices as leaders committing to encoders first and the BS as follower composing all transmissions.

\vspace{-5mm}\subsection{Utility Functions}\vspace{-1mm}\subsubsection{TX $k$'s Utility Function}
Each TX $k$ balances four competing objectives. It maximizes the compositional semantic information $\mathbb{S}(\bm{y}; \bm{z}_k)$ in \eqref{eq_Sy_z}  while minimizing its communication cost, which by the information bottleneck principle is approximated by the representation entropy $\mathbb{S}(\mathcal{C}_k; \ell_k) = I(\mathcal{C}_k; \ell_k) \approx H(\ell_k) = -\sum_\ell p(\ell_k)\log p(\ell_k)$. It penalizes redundancy with other devices through $D_r(\bm{z}_k, \bm{z}_{-\{k\}}) = \sum_{j\neq k} I(\bm{z}_k; \bm{z}_j)$, which couples the TX problems, and it enforces sufficient compositionality via the score $C(\bm{z}_k)$ in \eqref{eq_cs_score}. Combining these, TX $k$'s utility is
\vspace{-2mm}\begin{equation}
\begin{aligned}
    &U_k(\theta_k, w_k; \theta_{-\{k\}}, \phi, \psi) = \alpha_k \cdot \mathbb{S}(\bmy; \bmz_k) 
- \beta_k \cdot \mathbb{S}(\mathcal{C}_k; l_k) 
\\&- \gamma_k \cdot D_r(\bmz_k, \bmz_{-\{k\}}) 
- \lambda_k \cdot \max(0, C_{\text{min}} - C(\bmz_k)),
\label{eq_U_k}
\end{aligned}
\vspace{-2mm}\end{equation}
where ${\theta}_k$ are the encoder parameters, ${w}_k$ the composition weights, ${\theta}_{-\{k\}}$ the other TXs' strategies, and $(\phi, \psi)$ the BS's correlator and generator parameters. The non-negative weights $\alpha_k, \beta_k, \gamma_k, \lambda_k$ set the relative importance of each term, and the final term penalizes violations of the minimum compositionality constraint $C_{\min}$.
\subsubsection{BS's Utility Function}
We  define the BS's utility as the ratio of total semantic information to reconstruction error:
\vspace{-2mm}\begin{equation}
U_{\text{BS}}(\phi, \psi; \{\bmzh_k\}_{k=1}^K) = \sum_{k=1}^K \frac{\mathbb{S}(\bmy; \bmzh_k)} {D(\phi, \psi)}.
\label{eq_U_BS}
\vspace{-2mm}\end{equation}
This ratio captures the tradeoff in SC between maximizing compositional semantic information and minimizing inference error $D(\phi, \psi)$ for $\bmy$. The BS evaluates SRs by their effectiveness in enabling accurate inference, not just their information content. The reconstruction error $D(\phi, \psi)$ 
is defined as:
\vspace{-2mm}\begin{equation}
\begin{aligned}
D(\phi, \psi) &= \mathbb{E}\big[ \| \bmy - \text{put}_{\psi}(\varnothing, n_{\phi}(\hat{\mC}_1, \dots, \hat{\mC}_K)) \|^2 \big]
\\&= \mathbb{E}\big[ \| \bmy - \arg \min_{\hat{\bmy}} \sum_{k=1}^K E_{\psi}(\hat{\bmy}; \bmz_k) \|^2 \big]
\vspace{-2mm}\end{aligned}
\label{eq_D}
\vspace{-3mm}\end{equation}
where $E_{\psi}(\hat{\bmy}; \bmz_k)$ is an energy function quantifying the semantic compatibility between the generated inference $\hat{\bmy}$ and the received representations $\bmz_k$.
We define this energy function through a probabilistic alignment measure:
\vspace{-2mm}\begin{equation}
\begin{aligned} \small
    E_{\psi}(\hat{\bmy}; \bmz_k) = -\log \left( \frac{\exp\big(-\text{KL}(p(\bmz_k \mid \mathcal{C}_k) \,\|\, p_{\psi}(\hat{\bmy} \mid \hat{\mC}))\big)}{\sum\limits_{j=1}^K \exp\big(-\text{KL}(p(\bmz_j \mid \mC_j) \,\|\, p_{\psi}(\hat{\bmy} \mid \hat{\mC}))\big)} \right)
    \label{eq:energy_function}
\end{aligned}
\vspace{-2mm}\end{equation}
This formulation assigns low energy when device $k$'s transmitted distribution $p(\bmz_k \mid \mathcal{C}_k)$ is well-aligned with the BS's target distribution $p_{\psi}(\hat{\bmy} \mid \hat{\mC})$, and normalizes across all devices to ensure proper probability scaling. The Kullback-Leibler divergence term measures the semantic distance between the transmitted and target distributions, with smaller divergence indicating better alignment.

\vspace{-3mm}\subsection{Stackelberg Game Formulation}\vspace{-1mm}

The multi-device CSC problem exhibits a natural hierarchical 
structure that we formalize as a Stackelberg game~\cite{zhang2009stackelberg}. Devices 
(leaders) must commit to encoding strategies $(\theta_k, w_k)$ and SRs $\bmz_k$ before observing the BS's composition approach. The BS 
(follower) then 
optimally composes $K$ transmissions $\{\bmz_1, \ldots,\bmz_k\}$  via strategy $(\phi, \psi)$. Further, we formally defines 
the Stackelberg game as follows.

\vspace{-2mm}\begin{definition}  The \emph{CSC game} $\Gamma = (N, \{\Theta_k\}_{k=1}^K, \Phi, \{U_k\}_{k=1}^K, U_{\text{BS}})$ consists of:
\begin{itemize}
    \item A set of players $N = \{1, \dots, K, \text{BS}\}$
    \item Strategy spaces $\Theta_k = \{(\theta_k, w_k): \theta_k \in \mathbb{R}^{|\theta|}, w_k \in \Delta^{D-1}\}$ for each TX $k$, where $\Delta^{D-1}$ is the $(D-1)$-dimensional probability simplex
    \item Strategy space $\Phi \!=\! \{(\phi, \psi): \phi \in \mathbb{R}^{|\phi|}, \psi \in \mathbb{R}^{|\psi|}\}$ for BS
    \item Utility functions $U_k$ and $U_{\text{BS}}$ as defined in \eqref{eq_U_k} and \eqref{eq_U_BS}.
\end{itemize}
\end{definition}
The game proceeds in two stages. In stage 1, each device $k \in \mK$ simultaneously solves:
\vspace{-1mm}\begin{equation}
\begin{aligned}
\max_{\theta_k, w_k}& \; U_k(\theta_k, w_k; \theta_{-\{k\}}, \phi^*(\bmz_1, \dots, \bmz_K), \psi^*(\bmz_1, \dots, \bmz_K)), \\[-2mm]
\mbox{s.t.}\,\,
&\bmz_k = \sum_{i=1}^D w_{\{k,i\}} \, R(\text{Enc}_{\theta_k}(X_k)[i]) \\
&C(\bmz_k) \geq C_{\text{min}}, \\
&\sum_{i=1}^D w_{\{k,i\}} = 1, \quad w_{\{k,i\}} \geq 0.
\label{eq:transmitter_problem}\vspace{-2mm}
\end{aligned}
\vspace{-2mm}\end{equation}
Here, $\phi^*(\cdot)$ and $\psi^*(\cdot)$ represent the TX's anticipation of the BS's best response function, which maps the transmitted SRs to optimal composition and generation parameters.
In stage 2, after observing all SRs $\{\bmz_1, \dots, \bmz_K\}$, the BS solves:
\begin{equation}
\begin{aligned}
&\max_{\phi, \psi} \; U_{\text{BS}}(\phi, \psi; \{\bmz_k\}_{k=1}^K)\\
\text{s.t.}\,\,  &n_{\phi} \circ n_{\phi} = n_{\phi} \quad \text{(idempotence)} \\
&n_{\phi} \text{ is a magma/comagma homomorphism} \\
&\text{put}_{\psi} \text{ satisfies lens laws}\, (L_1'), (L2'), (L3') 
\label{eq:bs_problem}
\end{aligned}
\end{equation}
The constraints in \eqref{eq:bs_problem} ensure the BS's  strategy respects the categorical structures 
of Section III. The idempotence constraint forces  correlator $n_{\phi}$ to project onto valid concept combinations, the homomorphism properties preserve the compositional algebra, and the lens laws ensure $\text{put}_\psi$ generates inferences correctly from  composed concepts.
Because the redundancy term $D_r(\bmz_k, \bmz_{-\{k\}}) $ couples each TX's utility to the others' strategies, the device subproblems form a Nash game in which each device anticipates the others, while the Stackelberg structure additionally requires devices to anticipate the BS's composition strategy $(\phi^*, \psi^*)$.   
$
((\theta_k^*, w_k^*)_{k=1}^K, \phi^*, \psi^*)$
constitutes a \emph{SE} \cite{elrahi2019managing} if each device $k$ plays a best response 
   given other devices' strategies and the BS's best response: $
    (\theta_k^*, w_k^*) \in \arg \max_{\theta_k, w_k} U_k(\theta_k, w_k; \theta_{-\{k\}}^*, \phi^*(\cdot), \psi^*(\cdot))
    $, subject to constraints in \eqref{eq:transmitter_problem}, and the BS optimizes anticipating the 
   followers' equilibrium strategies:
$
    (\phi^*, \psi^*) \in \arg \max_{\phi, \psi} U_{\text{BS}}(\phi, \psi; \{\bmz_k^*\}_{k=1}^K)$, subject to constraints in \eqref{eq:bs_problem}. When the followers' Nash equilibrium  (NE) is unique, the 
SE reduces to finding the BS's best response 
to this unique equilibrium point.
This equilibrium reflects the hierarchical protocol, where TXs commit to encoding strategies anticipating how the BS will  compose their information with others.

\vspace{-6mm}\subsection{Existence and Uniqueness of Equilibrium}\vspace{-1mm}

Before presenting solution algorithms, we establish equilibrium existence through a two-step approach: 
first proving that the devices' NE exists for any 
fixed BS strategy, then showing that the overall Stackelberg 
equilibrium exists.
\begin{theorem} Given a fixed BS strategy $(\Phi,\psi)$, the devices' subgame admits at least 
one pure-strategy NE if:
\begin{enumerate}
    \item The strategy spaces $\Theta_k$ for all $k$ and $\Phi$ are non-empty, compact, and convex subsets of Euclidean space
    \item Each device's utility $U_k(\theta_k,w_k; \theta_{-k}, \Phi, \psi)$ is continuous in all 
   arguments and quasi-concave in $(\theta_k,w_k)$.
\end{enumerate}
\label{thm:existence}
\end{theorem}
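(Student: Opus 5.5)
The plan is to reduce the statement to the Debreu--Glicksberg--Fan existence theorem for games with continuous, quasi-concave payoffs on compact convex strategy sets. The proof goes through Kakutani's fixed-point theorem.

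First, fix the BS strategy $(\phi,\psi)$ once and for all. The devices' subgame then becomes a standard $K$-player normal-form game. Player $k$ has strategy set $\Theta_k$ and payoff $U_k(\theta_k,w_k;\theta_{-k},\phi,\psi)$ as in \eqref{eq_U_k}. The constraints of \eqref{eq:transmitter_problem} do two things. The simplex constraint on $w_k$ is already built into $\Theta_k$. The map $\bmz_k=\sum_i w_{\{k,i\}}R(\mathrm{Enc}_{\theta_k}(X_k)[i])$ is a definition that makes $U_k$ a function of $(\theta_k,w_k)$. The constraint $C(\bmz_k)\ge C_{\min}$ is absorbed by the hinge penalty in $U_k$, or alternatively it is assumed to be encoded in $\Theta_k$; hypothesis 1 says the resulting set is compact and convex. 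Hence $\Theta=\prod_k\Theta_k$ is a non-empty, compact, convex subset of a Euclidean space.

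Second, define the best-response correspondence
\[
\mathcal{B}_k(\theta_{-k})=\arg\max_{(\theta_k,w_k)\in\Theta_k}U_k(\theta_k,w_k;\theta_{-k},\phi,\psi),
\]
and set $\mathcal{B}=\prod_k\mathcal{B}_k:\Theta\rightrightarrows\Theta$. I will verify three properties:
\begin{itemize}
\item \emph{Non-empty values}: a continuous function on a compact set attains its maximum (Weierstrass).
\item \emph{Convex values}: the argmax of a quasi-concave function over a convex set is the upper level set at the maximal value, and upper level sets of quasi-concave functions are convex.
\item \emph{Closed graph, i.e.\ upper hemicontinuity}: this follows from Berge's maximum theorem. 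The payoff is jointly continuous in $(\theta_k,w_k,\theta_{-k})$, and the feasible set $\Theta_k$ is a constant (hence continuous) compact-valued correspondence.
\end{itemize}
Kakutani's theorem then gives a fixed point $\theta^*\in\mathcal{B}(\theta^*)$. By construction this is a pure-strategy NE of the devices' subgame.

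The main obstacle is not the fixed-point step, which is routine. It is checking that hypotheses 1--2 are meaningful for the utilities actually defined in the paper. I will treat continuity and quasi-concavity as given by hypothesis 2, as the statement permits. Then I will remark on why they are plausible:
\begin{itemize}
\item $H(\ell_k)$, the mutual-information redundancy $D_r$, and the cosine score $C(\bmz_k)$ are continuous whenever the encoder and the learned distributions $p(\ell\mid\bmz_k)$ depend continuously on the parameters (e.g.\ softmax outputs) and $\bmz_k\neq 0$.
\item $\max(0,\cdot)$ preserves continuity.
\end{itemize}
One further point needs care. $\theta_k\in\mathbb{R}^{|\theta|}$ is unbounded in the game definition, so I will explicitly restrict to a compact convex parameter box, as hypothesis 1 requires. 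Finally, the BS strategy space $\Phi$ appears in hypothesis 1 only for use in the subsequent Stackelberg existence argument; here it enters merely as a fixed parameter.
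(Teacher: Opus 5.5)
Your argument is correct, but it rests on a different existence theorem than the paper's proof.

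\textbf{The paper's route.} The paper's proof is a one-line appeal to Rosen's Theorem 1 on concave $n$-person games. It then tries to \emph{derive} the hypotheses from the structure of $U_k$: compactness ``by construction'' from the simplex, and quasi-concavity from the quadratic redundancy term and the concavity of entropy.

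\textbf{Your route.} You prove the Debreu--Glicksberg--Fan theorem directly, using Weierstrass, Berge, and Kakutani on the product best-response correspondence. You take continuity and quasi-concavity as given.

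\textbf{Why your route fits the statement better.} Rosen's theorem assumes each payoff is \emph{concave} in the player's own strategy. Its proof maximizes a weighted aggregate $\sum_k r_k U_k$ over a possibly coupled constraint set. That argument breaks under mere quasi-concavity, because sums of quasi-concave functions need not be quasi-concave.

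\textbf{What Rosen's framework would add.} It handles coupled (jointly convex) constraints and, under diagonal strict concavity, gives uniqueness. Neither is needed here. Each device constraint (the simplex and $C(\bmz_k)\ge C_{\min}$) involves only that device's own $(\theta_k,w_k)$, so the feasible set is a product. Your Berge/Kakutani argument therefore applies as is.

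\textbf{Points the paper's proof misses.} Your explicit remark that $\theta_k\in\mathbb{R}^{|\theta|}$ is unbounded fixes an oversight in the paper's proof. Compactness does not follow from the simplex and must be imposed, as you do.

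Treating quasi-concavity as a hypothesis is also the prudent choice, because the paper's justification does not hold up. The entropy term enters $U_k$ as $-\beta_k H$, which is convex rather than concave in the symbol distribution. That distribution also depends nonlinearly on $\theta_k$ through the encoder.
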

\vspace{-2mm}\begin{IEEEproof}
See Appendix~\ref{proof_thm:existence}.    
\end{IEEEproof}

While Theorem \ref{thm:existence} guarantees existence under mild regularity conditions, uniqueness requires stronger assumptions on the structure of the utility functions such as strict concavity of $U_{BS}$, diagonal dominance of best-response Jacobians, and super-additive compositional information, it is neither required for our algorithmic approach nor guaranteed in practical SC systems. Our ADMM algorithm (Section IV-E) converges to stable equilibria without requiring uniqueness.
 

\vspace{-2mm}\begin{lemma}
The BS's best-response mapping $\text{BR}_{\text{BS}}: \prod_{k=1}^K \Theta_k \to \Phi$ is well-defined and upper hemicontinuous.
\end{lemma}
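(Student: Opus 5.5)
The natural tool is Berge's Maximum Theorem, applied to the BS problem \eqref{eq:bs_problem} viewed as a parametric optimization. The parameter is the devices' joint strategy $(\theta_k,w_k)_{k=1}^K \in \prod_k \Theta_k$. The decision variable is $(\phi,\psi)\in\Phi$, and the feasible set is $F \subseteq \Phi$, cut out by the idempotence, magma/comagma homomorphism and weak lens-law constraints. Under the hypotheses of Theorem~\ref{thm:existence}, $\Phi$ is non-empty and compact. Two things then need to be shown: that $F$ defines a continuous, compact-valued, non-empty correspondence, and that $U_{\text{BS}}$ is jointly continuous in $(\phi,\psi)$ and in the device strategies. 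Berge then gives a non-empty, compact-valued, upper hemicontinuous argmax correspondence, which is what the lemma means by ``well-defined and upper hemicontinuous.''

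\emph{Continuity of the objective.} Each device map $(\theta_k,w_k)\mapsto \bmz_k=\sum_i w_{k,i}R(\text{Enc}_{\theta_k}(X_k)[i])$ is continuous, assuming the encoder and $R$ are continuous. The posteriors $P(c\mid\bmz)$, and hence $\mathbb{S}(\bmy;\bmzh_k)$ in \eqref{eq_Sy_z}, are finite sums of continuous functions of $\bmz_k$ over the finite sets $\mathbb{P}(\mC)$ and $\bar{\mC}$. For the denominator $D(\phi,\psi)$ in \eqref{eq_D}, the energy \eqref{eq:energy_function} is continuous because it is a log-softmax of KL terms. The delicate part is the inner $\arg\min_{\hat\bmy}$. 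I would assume it is attained uniquely, or replace it by the continuous put map $\text{put}_\psi\circ n_\phi$, which is the first line of \eqref{eq_D}. Dominated convergence over the bounded (compact) data then gives continuity of the expectation. To keep the ratio continuous I need $D(\phi,\psi)\ge \epsilon>0$ on $F$. I would impose this as a mild regularity condition, for instance through additive noise in $\bmy$ or a floor $D+\epsilon$.

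\emph{The constraint correspondence.} In \eqref{eq:bs_problem} the constraints involve only $(\phi,\psi)$ and not the device strategies, so $F$ is a constant correspondence. A constant correspondence is trivially both upper and lower hemicontinuous. Compactness of $F$ comes from $F$ being closed in compact $\Phi$. Closedness holds because each constraint is the zero set or sublevel set of a continuous function of $(\phi,\psi)$. Idempotence is $\|n_\phi\circ n_\phi-n_\phi\|=0$. The homomorphism and lens laws $(L_1')$ and $(L_3')$ are equalities, and $(L_2')$ is the inequality $d_B\le\varepsilon$; all of these are preserved under limits when the parametrizations are continuous. Non-emptiness needs a witness. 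The identity correlator is idempotent and a homomorphism, and paired with a put realizing $g(p(x,b))=b$ it satisfies the weak laws, so I would point to it.

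\emph{Main obstacle.} I expect the hardest step to be the continuity of $D$. The $\arg\min$ inside it can jump, so the natural route to continuity is unavailable without extra structure. I would first try to use the lens law $(L_1')$ to identify the minimizer with the put output, which makes it a continuous function of $(\phi,\psi)$. If that fails, I would fall back on an explicit assumption that $E_\psi$ is strictly convex in $\hat\bmy$, so that the minimizer is unique and, by Berge again, continuous. With both pieces in place, the Maximum Theorem yields that $\text{BR}_{\text{BS}}$ is non-empty, compact-valued and upper hemicontinuous. Finally, $F$ compact and the objective continuous make the best-response set closed in $\Phi$, so the correspondence also has a closed graph, which is the form used later in a Kakutani argument for Stackelberg existence.
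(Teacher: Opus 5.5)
Your proposal is correct and follows the same route as the paper, which just invokes Weierstrass to get attainment of the maximum of the continuous $U_{\text{BS}}$ on compact $\Phi$ and Berge's Maximum Theorem for upper hemicontinuity. Your version is more careful than the paper's, which ignores the constraint set in \eqref{eq:bs_problem} and takes continuity of the ratio (positivity of $D$, behaviour of the inner $\arg\min$) for granted. One small repair: $(L_3')$ does not follow from $(L_1')$, so for your non-emptiness witness take a put of the form $p(x,b)=s(b)$ with $g\circ s=\mathrm{id}$, which satisfies $(L_1')$, $(L_2')$ and $(L_3')$ simultaneously.
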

\vspace{-3mm}\begin{proof}
By the Weierstrass Extreme Value Theorem, the continuous function $U_{\text{BS}}(\phi,\psi;\{z_k\})$ attains its maximum on the compact set $\Phi$. Upper hemicontinuity follows from Berge's Maximum Theorem \cite{Walker1979GeneralizationMaximum}. See Appendix C for details.
\end{proof}
\vspace{-4mm}\begin{theorem}
\label{thm:stackelberg_existence}
Under the conditions of Theorem 2 and Lemma 4, the CSC Stackelberg game $\Gamma$ admits at least one equilibrium.
\end{theorem}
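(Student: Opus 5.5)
We will prove the statement by backward induction, treating the devices as leaders who optimize \emph{against} the BS's best response, not against a fixed $(\phi,\psi)$.

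\textbf{Step 1: selecting the follower's response.} Lemma 4 shows that $\text{BR}_{\text{BS}}:\prod_k\Theta_k\to\Phi$ has non-empty compact values and is upper hemicontinuous. Compactness of values comes from the closed constraint set in \eqref{eq:bs_problem} inside the compact $\Phi$. We therefore work with the induced leader payoff
\[
V_k(\theta_k,w_k;\theta_{-k})=\max_{(\phi,\psi)\in \text{BR}_{\text{BS}}(\theta)} U_k\big(\theta_k,w_k;\theta_{-k},\phi,\psi\big).
\]
This is the optimistic (strong) Stackelberg convention: ties in the follower's problem are broken in the leaders' favour. By Berge's theorem applied to the upper hemicontinuous, compact-valued correspondence $\text{BR}_{\text{BS}}$ and the continuous $U_k$, the function $V_k$ is upper semicontinuous on the compact set $\prod_k\Theta_k$. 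If $\text{BR}_{\text{BS}}$ is single-valued, for instance under the strict concavity of $U_{\text{BS}}$ mentioned after Theorem 2, then it is continuous. In that case $V_k(\theta)=U_k(\theta_k,w_k;\theta_{-k},\phi^*(\theta),\psi^*(\theta))$ is continuous outright.

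\textbf{Step 2: Nash equilibrium of the leaders' reduced game.} We consider the game among the $K$ devices with payoffs $V_k$ on the compact convex sets $\Theta_k$ of Theorem 2. We then rerun the argument of Theorem 2, i.e.\ Kakutani on the leaders' best-response correspondence $B_k(\theta_{-k})=\arg\max_{\Theta_k}V_k(\cdot;\theta_{-k})$. Non-emptiness follows from upper semicontinuity on a compact set. Convexity of values requires quasi-concavity of $V_k$ in the leader's own strategy. A closed graph requires continuity of $V_k$ in $\theta_{-k}$. A fixed point $\theta^*$ is a leaders' equilibrium in which each device anticipates $\text{BR}_{\text{BS}}$. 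We then pick $(\phi^*,\psi^*)\in\text{BR}_{\text{BS}}(\theta^*)$ attaining the maximum in $V_k$. The resulting tuple satisfies both conditions in the paper's SE definition: each device best-responds given $\theta_{-k}^*$ and the follower map, and the BS maximizes $U_{\text{BS}}$ at $\{\bmz_k^*\}$. This gives at least one equilibrium of $\Gamma$.

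\textbf{Main obstacle.} The hard step is transferring the conditions of Theorem 2 to the reduced payoffs $V_k$. The composition $U_k(\cdot,\text{BR}_{\text{BS}}(\cdot))$ need not be quasi-concave or continuous in $\theta_{-k}$ even though $U_k$ is. I would first handle this by interpreting ``the conditions of Theorem 2'' as applying to the induced leader utilities, with $\text{BR}_{\text{BS}}$ continuous (single-valued). Then continuity is immediate and quasi-concavity is assumed.

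If only upper hemicontinuity is available, there are two fallbacks. The first is to replace the fixed-point step by a maximization argument for each leader, using the generalized Nash existence results for upper semicontinuous payoffs, i.e.\ better-reply security à la Reny. The second is to convexify by letting the BS mix over $\text{BR}_{\text{BS}}(\theta)$. This makes the selection correspondence convex-valued and upper hemicontinuous, so Kakutani can be applied directly to the joint map $\theta\mapsto\prod_k B_k(\theta_{-k})$ composed with the optimal follower selection. If $K=1$ (a single leader), the argument collapses to Weierstrass applied to the upper semicontinuous $V_1$ on compact $\Theta_1$, which already yields existence.
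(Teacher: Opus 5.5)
Your route is the paper's: backward induction, Lemma 4 for the follower, then a Kakutani-type fixed point on the devices' game in which each device anticipates $\text{BR}_{\text{BS}}$. The paper simply cites Debreu--Glicksberg--Fan for that last step.

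The obstacle you flag is genuine, and it is exactly the step the paper's proof skips. That theorem needs the \emph{induced} payoffs $U_k(\theta_k,w_k;\theta_{-k},\text{BR}_{\text{BS}}(\theta))$ to be continuous and quasi-concave in the device's own strategy. Neither Theorem 2 (which holds the BS strategy fixed) nor Lemma 4 gives this. In fact no proof can avoid an extra hypothesis, because the statement fails at the generality of its assumptions. Take $K=2$, $\Theta_1=\Theta_2=[0,1]$, $\Phi=[0,1]^2$ and $U_{\text{BS}}=-\|\phi-\theta\|^2$, so that $\text{BR}_{\text{BS}}(\theta)=\{\theta\}$ is single-valued and continuous. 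Let $U_1=\cos(2\pi(\phi_1-\theta_2))$ and $U_2=-\cos(2\pi(\theta_1-\phi_2))$. For fixed $\phi$, each $U_k$ is continuous and constant (hence quasi-concave) in its own strategy. Yet the induced payoffs are $\pm\cos(2\pi(\theta_1-\theta_2))$: device 1 best-responds only with $\theta_1-\theta_2\in\{0,\pm 1\}$ and device 2 only with $\theta_1-\theta_2=\pm\tfrac12$, so there is no pure equilibrium. Your primary move, reading ``the conditions of Theorem 2'' as conditions on the induced utilities with single-valued $\text{BR}_{\text{BS}}$, is therefore a necessary repair rather than a convenience. Under it, your Step 2 is a complete argument, and it is precisely what the paper's proof tacitly assumes.

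Your fallbacks do not close the gap:
\begin{itemize}
\item Reny's better-reply-security theorem still requires quasi-concavity. It handles discontinuity, not the nonconvexity of the leaders' best-reply sets.
\item Letting the BS randomize over $\text{BR}_{\text{BS}}(\theta)$ convexifies the follower's selection, not the sets $B_k(\theta_{-k})$ that Kakutani needs to be convex. If you instead hold the BS's mixture fixed inside $B_k$, the fixed point is a simultaneous-move Nash equilibrium, not a Stackelberg one.
\item The example above has a single-valued, continuous $\text{BR}_{\text{BS}}$, so neither device could rescue it.
\end{itemize}
Separately, with multi-valued $\text{BR}_{\text{BS}}$ and $K\ge 2$, the per-device optimistic values $V_k$ are not jointly attainable. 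The maximizers of $U_k$ over $\text{BR}_{\text{BS}}(\theta^*)$ can differ across $k$, so ``pick $(\phi^*,\psi^*)$ attaining the maximum in $V_k$'' is ill-defined. At the one BS action actually chosen, some device can earn strictly less than $V_k(\theta^*)$ and then has a profitable deviation. Your $K=1$ remark is correct.
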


\begin{proof}
See Appendix~\ref{proof_thm:stackelberg_existence}. 
\end{proof}
\vspace{-7mm}\subsection{Solution via Backward Induction}
\label{subsec:backward_induction}
\vspace{-1mm}
We solve the Stackelberg game using backward induction, beginning with the BS's problem in Stage 2 and then addressing the TXs' problem in Stage 1.

\vspace{-0mm}\subsubsection{Stage 2: BS's Optimal Strategy}

The BS's problem \eqref{eq:bs_problem} is a fractional program due to the ratio form of its utility \eqref{eq_U_BS}. We employ Dinkelbach's method \cite{dinkelbach1967nonlinear}, which transforms the fractional program into a sequence of parameter-dependent subproblems.
Introducing an auxiliary variable $\beta$, we solve
\vspace{-3mm}\begin{equation}
(\phi^{(t)}, \psi^{(t)}) = \argmax_{\phi,\psi} \left[ \sum_{k=1}^K \mathbb{S}(\bm{y}; \bm{z}_k) - \beta^{(t-1)} \cdot D(\phi, \psi) \right]
\label{eq:dinkelbach_subproblem}
\vspace{-2mm}\end{equation}
followed by the update:
\vspace{-4mm}\begin{equation}
\beta^{(t)} = \frac{\sum_{k=1}^K \mathbb{S}(\bm{y}; \bm{z}_k)}{D(\phi^{(t)}, \psi^{(t)})}
\label{eq:dinkelbach_update}
\vspace{-3mm}\end{equation}
The sequence $\{\beta^{(t)}\}$ is monotonically non-decreasing and converges to the optimal $\beta^*$, at which point $(\phi^{(t)}, \psi^{(t)})$ solves the original problem. 
\vspace{-2mm}\begin{proposition}
\label{prop:bs_optimality}
The optimal composition distribution $p^*_\psi(\hat{\bm{y}} \mid \hat{\mathcal{C}})$ satisfies the first-order condition $\nabla_{p_\psi(\hat{\bm{y}} \mid \hat{\mathcal{C}})} U_{\text{BS}} = 0$, obtained by differentiating the Dinkelbach objective through the energy function \eqref{eq:energy_function} via the chain rule.
\end{proposition}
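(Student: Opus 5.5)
We treat the BS's best response for fixed device representations $\{\bmz_k\}$ as an unconstrained-in-direction stationarity problem over the conditional distribution $p_\psi(\hat{\bmy}\mid\hat{\mC})$, restricted to the probability simplex. The starting point is the Dinkelbach reformulation \eqref{eq:dinkelbach_subproblem}. At the converged ratio $\beta^*$, a maximizer of $U_{\text{BS}}$ is also a maximizer of $F(\phi,\psi)=\sum_k \mathbb{S}(\bmy;\bmz_k)-\beta^* D(\phi,\psi)$, and $F$ vanishes there. Because of this, stationarity of $F$ and of the ratio coincide. Concretely, $\nabla U_{\text{BS}}=(\nabla N - \beta^*\nabla D)/D$ with $N=\sum_k\mathbb{S}(\bmy;\bmz_k)$, and $D>0$. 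We therefore first record this equivalence, since it reduces the proposition to $\nabla_{p_\psi} F=0$.

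Next we compute $\nabla_{p_\psi}D$ via the chain rule through \eqref{eq_D} and \eqref{eq:energy_function}. Write $\kappa_k=\KL(p(\bmz_k\mid\mC_k)\,\|\,p_\psi(\hat\bmy\mid\hat\mC))$. Then $E_\psi(\hat\bmy;\bmz_k)=\kappa_k+\log\sum_j e^{-\kappa_j}$. Summing over $k$ gives $\sum_k E_\psi=\sum_k\kappa_k+K\log\sum_j e^{-\kappa_j}$. The derivative of $\kappa_k$ with respect to the value $p_\psi(\hat\bmy\mid\hat\mC)$ is the familiar $-p(\bmz_k\mid\mC_k)/p_\psi$. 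The log-sum-exp term contributes softmax weights $\sigma_j=e^{-\kappa_j}/\sum_i e^{-\kappa_i}$. Hence
$\partial \sum_k E_\psi/\partial p_\psi=-\sum_k (1-K\sigma_k)\,p(\bmz_k\mid\mC_k)/p_\psi$.
The prediction $\hat\bmy^\star=\arg\min\sum_k E_\psi$ depends on $p_\psi$ implicitly. Assuming the inner minimizer is nondegenerate (positive definite Hessian $H$), the implicit function theorem gives $\partial\hat\bmy^\star/\partial p_\psi=-H^{-1}\partial_{\hat\bmy}\partial_{p_\psi}\sum_kE_\psi$. Therefore $\nabla_{p_\psi}D=-2\,\mathbb{E}[(\bmy-\hat\bmy^\star)^\top\partial\hat\bmy^\star/\partial p_\psi]$.

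The numerator $N$ depends on $p_\psi$ only through the posteriors $p(\hat c\mid\hat\bmz)$ used at the BS. These we treat as fixed for given received representations, or differentiate through the learned energy $E_\psi$ similarly, so that $\nabla_{p_\psi}N$ is an explicit term. We then add a Lagrange multiplier $\mu(\hat\mC)$ for the normalization $\sum_{\hat\bmy}p_\psi=1$. Setting $\nabla N-\beta^*\nabla D+\mu=0$ yields the stated first-order condition, interpreted as vanishing of the projected gradient on the simplex.

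Finally, we argue that a maximizer exists and is interior, so that the FOC is necessary rather than merely a KKT condition with active bounds. Existence of a maximizer follows from Lemma 4 (Weierstrass on compact $\Phi$). Interiority follows because the KL terms blow up as $p_\psi\to0$ on the support of any $p(\bmz_k\mid\mC_k)$, pushing optima away from the boundary. The categorical constraints in \eqref{eq:bs_problem} (idempotence, homomorphism, lens laws) we treat as restricting the feasible manifold without binding at the optimum, or else absorb into additional multipliers. We expect this step to be the main obstacle: justifying differentiability of the $\arg\min$ in \eqref{eq_D} and the non-binding of these structural constraints. We would first try assuming strict convexity of $\sum_kE_\psi$ in $\hat\bmy$ and a parametrization of $n_\phi$ that satisfies the constraints by construction.
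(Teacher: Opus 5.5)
You follow the same basic route as the paper but fill in more of it. The paper's only justification is to differentiate the Dinkelbach subproblem \eqref{eq:dinkelbach_subproblem} at the current iterate $\beta^{(t-1)}$. It differentiates directly through the energy \eqref{eq:energy_function}, not through the squared error and the $\arg\min$ in \eqref{eq_D}. It also differentiates through the $-p\ln p$ form of the intrinsic information, which produces the $(1+\log p_\psi)$ factor in \eqref{eq:bs_optimality_condition}. Your reduction to $\beta^*$, giving $\nabla U_{\text{BS}}=(\nabla N-\beta^*\nabla D)/D$ at the maximizer, correctly explains why such a condition characterizes the ratio's optimum only at the Dinkelbach fixed point. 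Your derivative $-\sum_k(1-K\sigma_k)\,p(\bm z_k\mid\mathcal C_k)/p_\psi$ is also right. Note that treating $N$ as fixed drops the numerator term the paper keeps.

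The step that does not go through is interiority. A blow-up of the $\kappa_k$ is not a penalty on $U_{\text{BS}}$. First, $D$ depends on $p_\psi$ only through the location of $\hat{\bm y}^\star$, not through the value of the energy. Second, the energy itself, $E_\psi(\hat{\bm y};\bm z_k)=-\log\sigma_k$, depends only on the differences $\kappa_k-\kappa_j$, so the softmax cancels any common blow-up. On the boundary the $\kappa_k$ are infinite and $-p(\bm z_k\mid\mathcal C_k)/p_\psi$ is undefined. Without interiority you therefore only get a KKT system with multipliers for $p_\psi\ge 0$, not an equality. This has to be assumed, e.g.\ by restricting $p_\psi$ to a full-support parametric family (such as the Gaussian one the paper uses next) with a maximizer interior to that parametrization. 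The compactness of $\Phi$ you borrow for existence does not help here, since a maximizer over a compact set may sit on its boundary.

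Your implicit-function step also needs more than a regularity assumption. Read literally, the KL in \eqref{eq:energy_function} compares whole distributions and integrates $\hat{\bm y}$ out. So $\sum_k E_\psi$ is constant in $\hat{\bm y}$, $H=0$, and the $\arg\min$ in \eqref{eq_D} is degenerate. You must first re-specify how $\hat{\bm y}$ enters the energy.

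Finally, your endpoint $\nabla N-\beta^*\nabla D+\mu\mathbf{1}=0$ is weaker than the stated $\nabla_{p_\psi}U_{\text{BS}}=0$, but you can close this gap yourself. Suppose $p_\psi$ enters only through the $\kappa_k$, and extend KL to unnormalized $p_\psi$. Rescaling $p_\psi\mapsto c\,p_\psi$ shifts every $\kappa_k$ by $-\log c$, so $\sigma_k$, $E_\psi$, $\hat{\bm y}^\star$ and $D$ are invariant, and hence $\langle p_\psi,\nabla D\rangle=0$. Your formula already gives $\langle p_\psi,\partial_{p_\psi}\sum_kE_\psi\rangle=-\sum_k(1-K\sigma_k)=0$. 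Pairing the stationarity equation with $p_\psi$ then yields $\mu=-\langle p_\psi,\nabla N\rangle$. This vanishes when $N$ does not depend on $p_\psi$, so the full gradient is zero as stated. If $N$ is differentiated through $-p\ln p$ as in the paper, $\mu$ is generally nonzero and only the projected condition holds.
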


\begin{figure*}
\begin{align}\small
&\frac{\sum_{j=1}^K \exp(-\KL(p(\bm{z}_j|\mathcal{C}_j) \| p_\psi(\hat{\bm{y}}|\hat{\mathcal{C}})))}{\exp(-\KL(p(\bm{z}_k|\mathcal{C}_k) \| p_\psi(\hat{\bm{y}}|\hat{\mathcal{C}})))} \cdot \left(1 - \sum_{j \neq k} \frac{\exp(-\KL(p(\bm{z}_j|\mathcal{C}_j) \| p_\psi(\hat{\bm{y}}|\hat{\mathcal{C}})))}{\exp(-\KL(p(\bm{z}_k|\mathcal{C}_k) \| p_\psi(\hat{\bm{y}}|\hat{\mathcal{C}})))}\right)^2 \cdot \frac{p(\bm{z}_k|\mathcal{C}_k)}{p_\psi(\hat{\bm{y}}|\hat{\mathcal{C}})} \nonumber \\
&+ \beta^{(t-1)} \cdot \sum_{k=1}^K \E_{p(\hat{\bm{z}}_k|\bm{z}_k)} \left[ \sum_{\hat{c}_k} p(\hat{c}_k|\hat{\bm{z}}_k) \cdot (1 + \log p_\psi(\hat{\bm{y}}|\hat{\mathcal{C}})) \cdot \sum_{c \in \bar{\mathcal{C}}} p(p(\hat{c}_k) = \bm{y}) \right] = 0
\label{eq:bs_optimality_condition}
\end{align}
\vspace{-6mm}\end{figure*}
\vspace{-2mm}
We solve this condition numerically. Under a Gaussian assumption, $p_\psi(\hat{\bm{y}} \mid \hat{\mathcal{C}}) = \mathcal{N}(\bm{\mu}_\theta, \Sigma_\theta)$ and $p(\bm{z}_k \mid \mathcal{C}_k) = \mathcal{N}(\bm{\mu}_k, \Sigma_k)$, the KL terms admit a closed form, and solving for $\bm{\mu}_\theta$ yields the weighted average
\begin{equation}
\begin{aligned}
\bm{\mu}^*_\theta &= \sum_{k=1}^K \alpha_k \bm{\mu}_k, \\  \alpha_k &= \frac{\exp(-\text{KL}(p(\bm{z}_k \mid \mathcal{C}_k) \,\|\, p_\psi(\hat{\bm{y}} \mid \hat{\mathcal{C}})))}{\sum_{j=1}^K \exp(-\text{KL}(p(\bm{z}_j \mid \mathcal{C}_j) \,\|\, p_\psi(\hat{\bm{y}} \mid \hat{\mathcal{C}})))}.
\end{aligned}
\end{equation}
The BS thus combines received representations by their KL alignment with the target, weighting more heavily those closer to the desired remote inference output  distribution.

\subsubsection{Stage 1: TXs' Optimal Strategies}

Each device $k$ must solve \eqref{eq:transmitter_problem} while anticipating the BS's best response derived above. Since the BS's strategy $(\phi^*, \psi^*)$ depends on the transmitted representations from all devices, the TXs' problems are coupled through the reconstruction error implicitly appearing in utilities via the BS's anticipated behavior.
We characterize the optimal TX strategies by leveraging Karush-Kuhn-Tucker (KKT) conditions.

\vspace{-2mm}\begin{proposition}
\label{prop:tx_optimality}
At an equilibrium, TX $k$'s encoder parameters $\bm{\theta}^*_k$ satisfy:
\vspace{-3mm}\begin{align} \small
\nabla_{\bm{\theta}_k} U_k &= \alpha_k \cdot \nabla_{\bm{\theta}_k} \mathbb{S}(\bm{y}; \bm{z}_k) - \beta_k \cdot \nabla_{\bm{\theta}_k} H(\bm{z}_k) \nonumber \\
&\quad - \gamma_k \cdot \sum_{j \neq k} \nabla_{\bm{\theta}_k} \|\Cov(\bm{z}_k, \bm{z}_j)\|^2_F \nonumber \\
&\quad - \lambda_k \cdot \mathbbm{1}_{\{\mathcal{C}(\bm{z}_k) < C_{\min}\}} \cdot \nabla_{\bm{\theta}_k} \mathcal{C}(\bm{z}_k) = \bm{0}
\label{eq:tx_optimality_condition}
\vspace{-2mm}\end{align}
where $\mathbbm{1}_{\{\mathcal{C}(\bm{z}_k) < C_{\min}\}}$ is the indicator function that equals 1 when the compositionality constraint is violated and else 0.
\end{proposition}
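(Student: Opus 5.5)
The plan is to treat Proposition~\ref{prop:tx_optimality} as a stationarity (KKT) statement for TX $k$'s stage-1 problem \eqref{eq:transmitter_problem}, with the BS's response already substituted in. Fix the equilibrium strategies $\theta^*_{-\{k\}}$ of the other devices and the BS's best response $(\phi^*,\psi^*)$ from Stage 2. TX $k$ then faces a single-agent optimization over $(\theta_k,w_k)$. The structure of the argument is as follows.

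First, eliminate the simplex constraint. The constraint $w_k\in\Delta^{D-1}$ involves only $w_k$, and the encoder parameters $\theta_k\in\mathbb{R}^{|\theta|}$ are unconstrained. Hence the KKT conditions, restricted to the $\theta_k$ block, contain no multipliers. Any interior local maximizer $\theta^*_k$ must satisfy $\nabla_{\theta_k}U_k=\bm 0$, where the gradient is taken through the map $\theta_k\mapsto \bm z_k=\sum_i w_{k,i}R(\text{Enc}_{\theta_k}(X_k)[i])$ by the chain rule.

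Second, handle the compositionality constraint $C(\bm z_k)\ge C_{\min}$. Since \eqref{eq_U_k} already contains the exact penalty $\lambda_k\max(0,C_{\min}-C(\bm z_k))$, I would treat $\lambda_k$ as the multiplier of this constraint. Differentiating the hinge term gives $-\lambda_k\mathbbm{1}_{\{C(\bm z_k)<C_{\min}\}}\nabla_{\theta_k}C(\bm z_k)$, where the sign is absorbed according to the paper's convention. At the kink $C=C_{\min}$ one takes the element of the subdifferential that is selected by the indicator convention. Complementary slackness then corresponds exactly to the indicator.

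Third, differentiate the remaining terms one at a time.
\begin{itemize}
\item The semantic term contributes $\alpha_k\nabla_{\theta_k}\mathbb{S}(\bm y;\bm z_k)$ directly from \eqref{eq_Sy_z}.
\item For the rate term, use the information-bottleneck approximation $\mathbb{S}(\mathcal C_k;\ell_k)\approx H(\ell_k)$. Since $\ell_k=p_k^{\#}(\bm z_k)$ is a deterministic function of $\bm z_k$, identify $H(\ell_k)$ with $H(\bm z_k)$ up to the fiber quantization. This gives $-\beta_k\nabla_{\theta_k}H(\bm z_k)$.
\item For the redundancy term $\sum_{j\ne k}I(\bm z_k;\bm z_j)$, use a Gaussian surrogate. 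For jointly Gaussian vectors with small cross-covariance, $I(\bm z_k;\bm z_j)=-\tfrac12\log\det(I-\Sigma_k^{-1/2}\Cov(\bm z_k,\bm z_j)\Sigma_j^{-1}\Cov(\bm z_j,\bm z_k)\Sigma_k^{-1/2})$. To first order, with whitened (normalized) representations, this is proportional to $\|\Cov(\bm z_k,\bm z_j)\|_F^2$. Substituting this surrogate produces the $\gamma_k$ term.
\end{itemize}

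Collecting the four contributions yields \eqref{eq:tx_optimality_condition}.

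The main obstacle is justifying that the BS's anticipated response does not contribute extra terms, and that the surrogates are legitimate. Because the BS is the follower, $U_k$ depends on $\theta_k$ through $\phi^*(\bm z),\psi^*(\bm z)$, but $U_k$ in \eqref{eq_U_k} has no explicit $(\phi,\psi)$-dependent term apart from the posterior inside $\mathbb{S}(\bm y;\bm z_k)$. I would argue that this dependence is already included in the total derivative $\nabla_{\theta_k}\mathbb{S}(\bm y;\bm z_k)$, interpreted as a total gradient. Alternatively, at Stage-2 optimality, an envelope-type argument using Proposition~\ref{prop:bs_optimality} removes first-order effects of $(\phi,\psi)$. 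For the surrogates, I would state explicitly that the condition holds under the Gaussian/whitened approximation and the entropy approximation. I would also note that the conclusion requires $\theta^*_k$ to be a differentiable interior point, which I would assume, for instance via continuity of $U_k$ as in Theorem~\ref{thm:existence}.
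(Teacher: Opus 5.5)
Your overall route (first-order stationarity of $U_k$ in $\bm\theta_k$ via the chain rule through $\bm z_k$, with entropy and Gaussian-covariance surrogates for the rate and redundancy terms) is the one the paper intends; the paper itself only says it ``leverages KKT conditions''. You are right to flag the surrogates as approximations. The genuine gap is your handling of the compositionality term. Write $f_k$ for the first three terms of $U_k$.

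(i) The constraint $C(\bm z_k)\ge C_{\min}$ in \eqref{eq:transmitter_problem} depends on $\bm\theta_k$. So, contrary to your first step, the $\bm\theta_k$-block does carry a multiplier. Meanwhile $\lambda_k$ is a fixed nonnegative weight in $U_k$, not a free multiplier.

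(ii) Complementary slackness allows a nonzero multiplier exactly when the constraint is active, $C(\bm z_k^*)=C_{\min}$. By contrast, $\mathbbm{1}_{\{C(\bm z_k)<C_{\min}\}}$ is nonzero only at infeasible points, so the two do not correspond. Under the hard constraint, the indicator term vanishes at every feasible point, and the binding-case term $\mu_k\nabla_{\bm\theta_k}C(\bm z_k)$, $\mu_k\ge 0$, is simply lost. If instead you keep only the hinge penalty, then whenever the constraint binds and $\lambda_k$ exceeds its multiplier, the optimum sits exactly on the kink $C=C_{\min}$. There the Clarke subdifferential gives $\bm 0\in\nabla_{\bm\theta_k}f_k+t\lambda_k\nabla_{\bm\theta_k}C(\bm z_k)$ for some $t\in[0,1]$. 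Selecting $t=0$ ``by the indicator convention'' is unjustified and generally false there.

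(iii) Off the kink, differentiating $-\lambda_k\max(0,C_{\min}-C(\bm z_k))$ gives $+\lambda_k\mathbbm{1}_{\{C(\bm z_k)<C_{\min}\}}\nabla_{\bm\theta_k}C(\bm z_k)$. Since $\lambda_k\ge 0$, the minus sign in \eqref{eq:tx_optimality_condition} cannot be ``absorbed''. Your computation actually yields a different equation, and you should say so.

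A sound version has two options. Either assume the constraint is slack at equilibrium, $C(\bm z_k^*)>C_{\min}$, so the last term vanishes and the claim is just $\nabla_{\bm\theta_k}f_k=\bm 0$. Or replace the last term by $+\mu_k\nabla_{\bm\theta_k}C(\bm z_k)$ with a genuine multiplier: $\mu_k\ge 0$ and $\mu_k(C(\bm z_k^*)-C_{\min})=0$ under the hard constraint; $\mu_k\in[0,\lambda_k]$ under the hinge alone, equal to $\lambda_k$ when $C<C_{\min}$ and to $0$ when $C>C_{\min}$.

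Your fallback ``envelope-type'' argument for discarding the BS-response terms is also invalid. The BS's first-order condition (Proposition~\ref{prop:bs_optimality}) removes first-order effects of $(\phi,\psi)$ only on the BS's own optimal value, not on $U_k$. The leader's derivative generically contains $\nabla_{(\phi,\psi)}U_k\cdot\partial(\phi^*,\psi^*)/\partial\bm\theta_k$, and dropping it turns the Stackelberg condition into a Nash one. Keep only your first option: read $\nabla_{\bm\theta_k}\mathbb{S}(\bm y;\bm z_k)$ as a total derivative that includes this indirect term. This matters because, per the footnote on $\mathbb{S}(c;B)$, $\mathbb{S}$ is computed through the learned energies $E_\phi,E_\psi$.

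Finally, continuity of $U_k$ as in Theorem~\ref{thm:existence} does not give differentiability at $\bm\theta_k^*$; that must be an explicit assumption. With the surrogates in place, the resulting equation is exact only for the surrogate utility, not for the original $U_k$.
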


The gradients derived using \eqref{eq:tx_optimality_condition} (whose expressions are omitted due to lack of space) enable us to implement gradient-based optimization algorithms for finding TX equilibrium strategies, which we detail in the next subsection.

\vspace{-5mm}\subsection{Distributed Algorithm via ADMM}
\label{subsec:admm_algorithm}\vspace{-1mm}

While the gradient conditions \eqref{eq:tx_optimality_condition} characterize the equilibrium strategies, computing them requires coordination among TXs due to the coupling through the redundancy term $D_r(\bm{z}_k, \bm{z}_{-k})$ and the BS's anticipated response. Direct implementation would require each TX to know the strategies of all others, which is impractical in distributed systems. We therefore develop a distributed algorithm based on the ADMM that enables TXs to find equilibrium strategies through limited message passing with the BS. The key idea is to introduce a consensus variable $\bar{\bm{z}}$ representing the average transmitted representation, and reformulate each TX's problem to include a penalty for deviation from this consensus. The augmented utility for TX $k$ becomes:
\vspace{-2mm}\begin{equation}
U_k^{\prime}(\bm{\theta}_k, \bm{w}_k) = U_k(\bm{\theta}_k, \bm{w}_k; \bar{\bm{z}}, \phi, \psi) + \bm{\lambda}_k^\top(\bm{z}_k - \bar{\bm{z}}) - \frac{\rho}{2}\|\bm{z}_k - \bar{\bm{z}}\|^2
\label{eq:augmented_utility}
\end{equation}
where $\bm{\lambda}_k$ is a Lagrange multiplier vector enforcing consensus, and $\rho > 0$ is a penalty parameter controlling the strength of the consensus enforcement. This augmented formulation allows each TX to optimize locally while coordinating with others through the shared variable $\bar{\bm{z}}$, which is maintained and broadcast by the BS. The ADMM algorithm alternates between three steps: (i) each TX updates its strategy to maximize its augmented utility via gradient ascent, (ii) the BS updates the consensus variable and its own strategy, and (iii) the dual variables are updated to enforce consensus constraints. We  present the complete algorithm in Algorithm~\ref{alg:admm}. 
\begin{figure*}
\begin{align}\small 
&(\bm{\theta}_k^{(t)}, \bm{w}_k^{(t)}) = \argmax_{\bm{\theta}_k, \bm{w}_k} \Big[ U_k(\bm{\theta}_k, \bm{w}_k; \bm{z}_{-k}, \phi^{(t-1)}, \psi^{(t-1)}) \quad + (\bm{\lambda}_k^{(t-1)})^\top(\bm{z}_k - \bar{\bm{z}}^{(t-1)}) - \frac{\rho}{2}\|\bm{z}_k - \bar{\bm{z}}^{(t-1)}\|^2 \Big] \\
&\text{s. t. }\, \bm{z}_k = \sum_{i=1}^D w_{k,i} R(\text{Enc}_{\bm{\theta}_k}(\bm{X}_k)[i]), \,\, 
 \mathcal{C}(\bm{z}_k) \geq C_{\min}, \,\, \sum_{i=1}^D w_{k,i} = 1, w_{k,i} \geq 0.
\label{eq_solve1}
\end{align}\vspace{-6mm}
\end{figure*}
\begin{algorithm}[t]
\caption{Distributed CSC via ADMM}
\label{alg:admm}
\begin{algorithmic}[1]\footnotesize
\STATE \textbf{Input:} $\{\bm{X}_k\}_{k=1}^K$, $D$, weights $\{\alpha_k,\beta_k,\gamma_k,\lambda_k\}$, $C_{\min}$, penalty $\rho$, tolerances $\tau_{\text{p}},\tau_{\text{d}}$, max iterations $T$
\STATE \textbf{Init:} $\forall k$: $\bm{\theta}_k$ random, $\bm{w}_k=(1/D)\mathbbm{1}_D$, $\bm{\lambda}_k=\bm{0}$, $\bm{z}_k=\sum_i w_{k,i}R(\text{Enc}_{\bm{\theta}_k}(\bm{X}_k)[i])$; $\phi,\psi$ random; $\bar{\bm{z}}^{(0)}=\frac{1}{K}\sum_k \bm{z}_k$
\FOR{$t = 1$ to $T$}
\STATE \textbf{Step 1 (TX, parallel):} for each $k$, set $\bm{z}_{-k}=K\bar{\bm{z}}^{(t-1)}-\bm{z}_k^{(t-1)}$, solve \eqref{eq_solve1} by gradient ascent on
\vspace{-1mm}
\begin{equation*}\scriptsize
\nabla_{\bm{\theta}_k} U_k' = \nabla_{\bm{\theta}_k}U_k + \big(\bm{\lambda}_k^{(t-1)} - \rho(\bm{z}_k-\bar{\bm{z}}^{(t-1)})\big)\tfrac{\partial \bm{z}_k}{\partial \bm{\theta}_k},
\end{equation*}
\vspace{-3mm}
\STATE \quad project $\bm{w}_k$ onto the simplex, recompute $\bm{z}_k^{(t)}$, transmit to BS
\STATE \textbf{Step 2 (BS):} given $\{\bm{z}_k^{(t)}\}$, run Dinkelbach: iterate $(\phi^{(t,s)},\psi^{(t,s)})=\argmax_{\phi,\psi}[\sum_k \mathbb{S}(\bm{y};\bm{z}_k^{(t)})-\beta^{(s-1)}D(\phi,\psi)]$ and update $\beta^{(s)}$ via \eqref{eq:dinkelbach_update} until $|\beta^{(s)}-\beta^{(s-1)}|<\tau_{\text{D}}$
\STATE \textbf{Step 3 (Dual):} $\bar{\bm{z}}^{(t)}=\frac{1}{K}\sum_k\bm{z}_k^{(t)}$; $\forall k$: $\bm{\lambda}_k^{(t)}=\bm{\lambda}_k^{(t-1)}+\rho(\bm{z}_k^{(t)}-\bar{\bm{z}}^{(t)})$; broadcast $\bar{\bm{z}}^{(t)},\{\bm{\lambda}_k^{(t)}\},(\phi^{(t)},\psi^{(t)})$
\STATE \textbf{Stop} if $\sqrt{\sum_k\|\bm{z}_k^{(t)}-\bar{\bm{z}}^{(t)}\|^2}<\tau_{\text{p}}$ and $\rho\sqrt{K}\|\bar{\bm{z}}^{(t)}-\bar{\bm{z}}^{(t-1)}\|<\tau_{\text{d}}$
\ENDFOR
\RETURN $\{(\bm{\theta}^*_k,\bm{w}^*_k)\}_{k=1}^K$, $(\phi^*,\psi^*)$
\end{algorithmic}
\vspace{-1mm}\end{algorithm}
We now analyze its computational complexity to assess their scalability to large-scale physical AI systems.

\subsubsection{Complexity of ADMM}

For each TX $k$,   computing $\mathbb{S}(\bm{y}; \bm{z}_k)$ using Monte Carlo estimation requires $\mathcal{O}(M_{\text{samples}} \cdot |\Pi| \cdot |\mathcal{C}|^2)$ operations, where $M_{\text{samples}}$ is the number of Monte Carlo samples, $|\Pi|$ is the number of correlators, and $|\mathcal{C}|$ is the total number of concepts. Finally, gradient computation via backpropagation through the encoder network requires $\mathcal{O}(|\bm{\theta}_k|)$ operations. 
For the BS side, each Dinkelbach iteration requires evaluating the energy function $E_\theta(\hat{\bm{y}}; \bm{z}_k)$ for all $K$ devices, which costs $\mathcal{O}(K \cdot m \cdot \dim(\bm{y}))$ operations. Computing the reconstruction error $D(\phi, \psi)$ via Monte Carlo sampling requires $\mathcal{O}(N_{\text{samples}} \cdot K \cdot m \cdot \dim(\bm{y}))$ operations. With $T_{\text{Dinkelbach}}$ Dinkelbach iterations per ADMM iteration, the total BS complexity per ADMM iteration is $\mathcal{O}(T_{\text{Dinkelbach}} \cdot N_{\text{samples}} \cdot K \cdot m \cdot \dim(\bm{y}))$.
The full ADMM algorithm requires $T_{\text{outer}}$ outer iterations to converge. The total computational cost is
$\mathcal{O}(T_{\text{outer}} \cdot  T_{\text{Dinkelbach}} \cdot N_{\text{samples}} \cdot K \cdot m \cdot \dim(\bm{y}))$. The overall complexity scales linearly in all system parameters $(K, m, dim(y))$, with the remaining factors $(T_{\text{outer}}, T_{\text{Dinkelbach}}, N_{\text{samples}})$ being small constants that converge rapidly in practice \cite{dinkelbach1967nonlinear}, making the algorithm computationally tractable even for large-scale deployments.
\vspace{-6mm}\subsection{Theoretical Properties of the Equilibrium}
\label{subsec:theoretical_properties}\vspace{-2mm}

Beyond existence and uniqueness, we now establish several desirable theoretical properties of the SE.

\begin{theorem}
\label{thm:pareto_efficiency}
If the equilibrium strategies $(\bm{\theta}^*, \bm{w}^*, \phi^*, \psi^*)$ satisfy the complementarity condition:
\vspace{-2mm}\begin{equation}
\frac{\partial \mathbb{S}(\bm{y}; \bm{z}_i)}{\partial \bm{z}_j} \geq 0 \quad \text{for all } i \neq j
\label{eq:complementarity}
\vspace{-2mm}\end{equation}
indicating that concepts from different devices are synergistic rather than substitutable, then the equilibrium is Pareto efficient.
\end{theorem}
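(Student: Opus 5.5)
The proof goes by contradiction, using the equilibrium first-order conditions of Proposition~\ref{prop:tx_optimality} together with the complementarity condition \eqref{eq:complementarity}. Suppose there is a feasible profile $(\tilde{\bm{\theta}},\tilde{\bm{w}},\tilde\phi,\tilde\psi)$ that weakly improves every player's utility and strictly improves at least one.

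The first step is to reduce this to a statement about the devices' encoders. For any fixed set of representations $\{\bm{z}_k\}$, the BS strategy $(\phi^*,\psi^*)$ is a maximizer of $U_{\text{BS}}$ (the Dinkelbach fixed point). So an improvement for the BS can only come from a change in the transmitted representations. Hence it suffices to consider deviations $\tilde{\bm{z}} = \bm{z}^*+\bm{\delta}$, with the BS re-optimizing. By the envelope theorem, the BS's utility change is governed by $\sum_k \mathbb{S}(\bm{y};\bm{z}_k)$ through its numerator.

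The second step is a local argument. I would work with the weighted sum $W=\sum_k \mu_k U_k + \mu_0 U_{\text{BS}}$ for positive weights and show that $(\bm{\theta}^*,\bm{w}^*)$ is a stationary point of $W$ whose only directions of possible increase are excluded. Differentiating $W$ in $\bm{\theta}_j$ gives the own term $\mu_j\nabla_{\bm{\theta}_j}U_j$, which vanishes by \eqref{eq:tx_optimality_condition}. It also gives externality terms $\sum_{i\neq j}\mu_i\,\alpha_i\,\frac{\partial \mathbb{S}(\bm{y};\bm{z}_i)}{\partial \bm{z}_j}\frac{\partial \bm{z}_j}{\partial\bm{\theta}_j}$, minus the redundancy externalities $\gamma_i\nabla I(\bm{z}_i;\bm{z}_j)$, plus the BS term. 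By \eqref{eq:complementarity}, the semantic-information externalities are nonnegative, so every device's own increase of informativeness helps the others.

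The plan is then to argue that a joint Pareto-improving deviation must, to first order, move each $\bm{z}_j$ in a direction $\bm{\delta}_j$ with $\nabla_{\bm{z}_j}U_j^\top \bm{\delta}_j \ge 0$. Stationarity (or the KKT conditions including the simplex and $C_{\min}$ multipliers) forces this inner product to be $\le 0$. Combining the two makes the own-gradient contribution zero. The cross effects are signed by complementarity and enter symmetrically. Summing over $j$ then shows that the weighted sum cannot strictly increase, which contradicts the assumed strict improvement.

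The main obstacle is the redundancy term $D_r$. It produces negative externalities $-\gamma_i\nabla_{\bm{z}_j} I(\bm{z}_i;\bm{z}_j)$ that complementarity does not control, and the first-order argument only gives local, not global, Pareto optimality without concavity. I would first try to absorb $D_r$ by noting that it is symmetric in $(i,j)$. In the summed welfare gradient it then appears as $-2\gamma\nabla I$, which is already internalized by each device's own optimality condition when $\gamma_k\equiv\gamma$. Failing that, I would state the result under the implicit assumption that the synergy term dominates, i.e., $\alpha_i\,\partial\mathbb{S}/\partial\bm{z}_j \ge \gamma_i\,\partial I/\partial\bm{z}_j$. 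Global efficiency would then be obtained from the quasi-concavity assumed in Theorem~\ref{thm:existence}, upgraded to concavity along the deviation segment.
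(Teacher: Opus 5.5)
\textbf{There is a genuine gap: the decisive step has the sign backwards.}

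Take your own welfare gradient at the equilibrium. The own term $\mu_j\nabla_{\bm{\theta}_j}U_j$ vanishes. With $D_r$ set aside, the remaining cross terms $\sum_{i\neq j}\mu_i\alpha_i\,(\partial\mathbb{S}(\bm{y};\bm{z}_i)/\partial\bm{z}_j)(\partial\bm{z}_j/\partial\bm{\theta}_j)$ are \emph{nonnegative} by complementarity. So whenever some cross term is strictly positive at an interior equilibrium, the equilibrium is not a stationary point of any positively weighted welfare. Moving every device slightly in such a direction leaves each own utility unchanged to first order and raises every $U_i$ through the others' moves. That is a Pareto improvement, not an obstruction to one.

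Two specific steps fail:
\begin{itemize}
\item The claim that a Pareto-improving deviation must satisfy $\nabla_{\bm{z}_j}U_j^\top\bm{\delta}_j\ge0$ does not follow, because $\Delta U_j$ also contains $\sum_{i\neq j}\nabla_{\bm{z}_i}U_j^\top\bm{\delta}_i$.
\item Adding zero own terms to nonnegative cross terms gives $\Delta W\ge0$. That is compatible with a strict increase, not a contradiction to it.
\end{itemize}

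Positive spillovers are exactly the textbook source of Nash inefficiency (underprovision in public-good games). Nothing in the stated hypotheses rules this out. Consider the following instance:
\begin{itemize}
\item $K=2$, scalar $z_k\in[0,1]$, $\gamma_k=0$, the compositionality penalty inactive, and $D$ independent of $\bm{z}$;
\item $\mathbb{S}(\bm{y};z_k)=z_k+b\,z_{-k}$ with $b>0$, rate cost $z_k^2$, and unit weights.
\end{itemize}
Each $U_k=z_k+b\,z_{-k}-z_k^2$ is continuous and concave on a compact convex set, and complementarity holds. The unique equilibrium is $z_1^*=z_2^*=\tfrac12$ with $U_k^*=\tfrac14+\tfrac b2$. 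Yet $z_1=z_2=\tfrac12+\epsilon$ gives $U_k=\tfrac14+\tfrac b2+b\epsilon-\epsilon^2$, which is strictly better for both devices when $0<\epsilon<b$. It also raises $\sum_k\mathbb{S}(\bm{y};z_k)$, and hence $U_{\text{BS}}$.

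\textbf{Your handling of $D_r$ is also off.} The term $I(\bm{z}_i;\bm{z}_j)$ appears in both $U_i$ and $U_j$, but device $j$'s first-order condition internalizes only the copy in $U_j$. The copy in $U_i$ remains an uninternalized externality, so symmetry does not absorb it. Your fallback assumption that synergy dominates redundancy makes the net externality nonnegative, which strengthens the inefficiency rather than removing it. At an interior equilibrium, Pareto efficiency (given concavity) requires the net externalities $\sum_{i\neq j}\mu_i\nabla_{\bm{z}_j}U_i$ to \emph{vanish} for some positive weights. A correct statement therefore needs a hypothesis of that kind, for example:
\begin{itemize}
\item decoupled devices;
\item an identical-interest or potential structure that aligns each $U_k$ with welfare;
\item a corner equilibrium at which every improving direction is infeasible.
\end{itemize}
Complementarity alone supports at most a Milgrom--Roberts-type ranking among equilibria, and only after adding increasing differences and spillovers in the full $U_k$. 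That gives efficiency among equilibria, not over all profiles.

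\textbf{The paper's route does not rescue this.} The paper argues by contradiction from the strictly improving device $k'$: since $\bm{\theta}^*_{k'}$ is a best response, the others must have moved, and complementarity is then said to raise $U_{k'}$, contradicting optimality of $\bm{\theta}^*_{k'}$. This breaks at the same place as your argument. $\bm{\theta}^*_{k'}$ is optimal only against $\bm{\theta}^*_{-k'}$, so a gain for $k'$ caused by the others' moves contradicts nothing. The statement itself needs a different hypothesis.
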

\begin{IEEEproof}
See Appendix~\ref{proof_pareto}.
\end{IEEEproof}

\vspace{-4mm}\section{Simulation Results and Analysis}\vspace{-1mm}
We validate our CSC framework via a two-stage evaluation strategy. First, we use the Highway-Env intersection scenario from the Gymnasium environment \cite{towers2024gymnasium} to demonstrate compositional generalization under controlled conditions with ground-truth concept labels. Second, we use CARLA autonomous driving simulator \cite{dosovitskiy2017carla} with high-fidelity sensor models to validate practical performance on latency, bandwidth efficiency, and scalability under realistic  conditions.
\vspace{-5mm}\subsection{Compositional Generalization using GYM Environment}\vspace{-1mm}

We first evaluate CSC on the autonomous-vehicle intersection scenario (intersection-v1) [41], where the remote inference task is binary safety classification. Distributed sensors communicate their semantic observations to a remote decision-maker that determines whether the intersection is SAFE or UNSAFE to proceed. This task requires jointly reasoning over pedestrian behavior from the camera, vehicle response from the speed sensor, and proximity from the distance sensor. Unlike multi-modal fusion, which learns an entangled joint representation tied to the trained set of modalities and concept combinations, our compositional approach \emph{recombines independently-learned concepts via the categorical correlators of Section III}, enabling inference on concept combinations never seen jointly during training, which is what the novel-combination test set evaluates. The camera (V1) reports pedestrian position (5 values) and activity (4 values), the speed sensor (V2) reports vehicle speed (4 values) and acceleration (3 values), and the distance sensor (V3) reports distance to pedestrian (4 values) and time-to-collision (4 values), yielding a 6-concept semantic space with 25 discrete values encoded as one-hot features.

\vspace{-0mm}We collect 400 training scenarios through random exploration and construct a balanced  $200$ scenario test set  with $50\%$ novel combinations (concept patterns never seen in  training) and $50\%$ seen combinations. Safety follows compositional rules requiring 3-way interactions across all three sensors. For example, a running pedestrian (V1) at edge positions (V1) is safe if the vehicle is braking (V2) AND distance is medium (V3), but unsafe if accelerating (V2) AND distance is close (V3). This simple setup enables rigorous evaluation of compositional generalization by testing whether models can compose known concepts in novel ways to make correct inferences, while the balanced test set allows comparison of performance on both interpolation (seen combinations) and compositional reasoning (novel combinations).
\begin{figure*}[t]
\begin{subfigure}{.33\textwidth}\vspace{-0mm}
{\includegraphics[width=1\columnwidth]{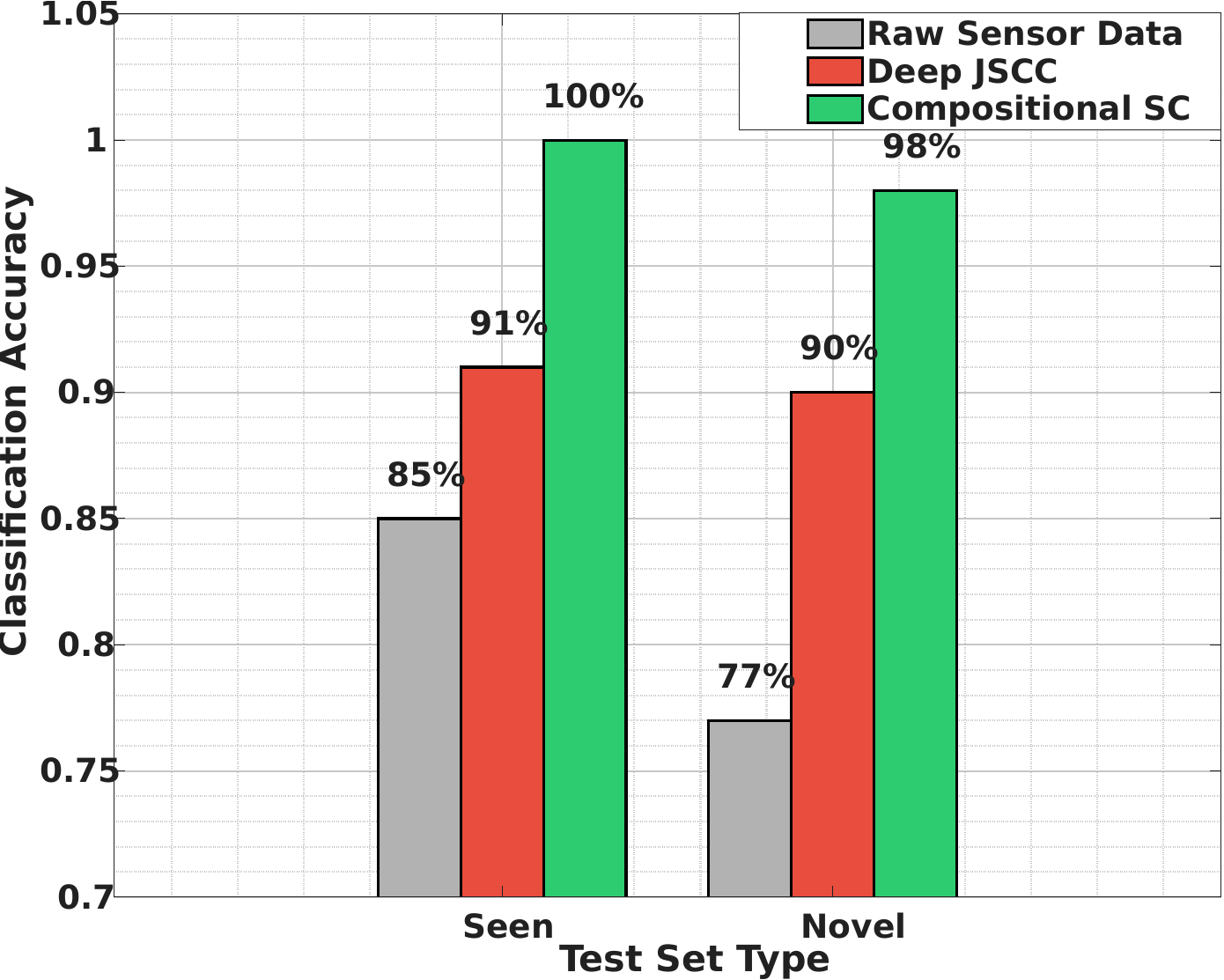}}\vspace{-1mm}
\caption{}
\label{Seen_vs_Novel}\vspace{-1mm}
\end{subfigure}
\vspace{-0mm}\begin{subfigure}{.33\textwidth}
\vspace{-0mm}{\includegraphics[width=0.9\columnwidth]{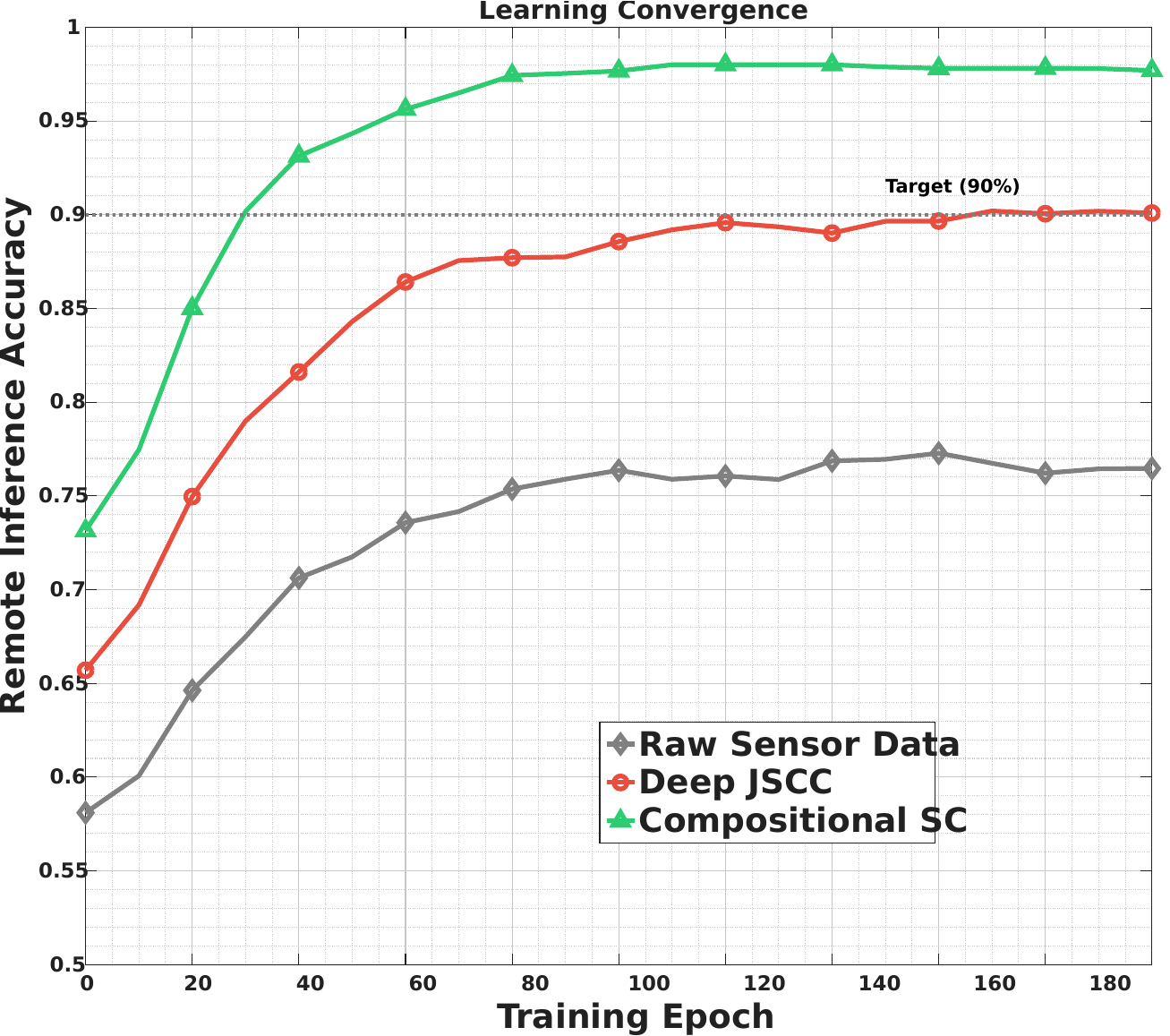}}\vspace{-1mm}
\caption{}
\label{RI_accuracy}\vspace{-1mm}
\end{subfigure}
\vspace{-3mm}\begin{subfigure}{.34\textwidth}\vspace{-2mm}
\centerline{\hspace{2mm}\includegraphics[width=0.93\columnwidth]{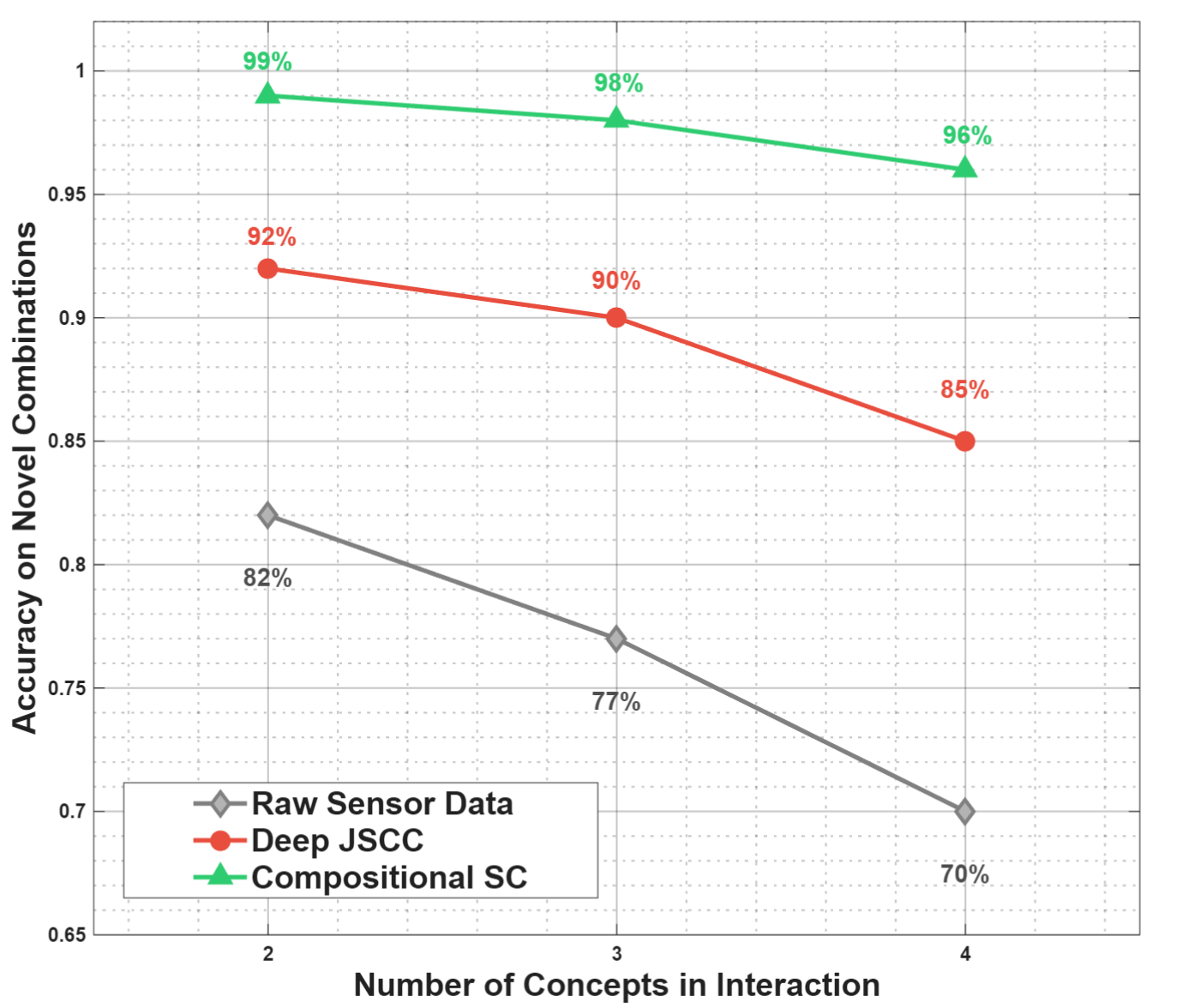}}\vspace{-2mm}
\caption{}
\label{Complexity_Robustness}\vspace{-1mm}
\vspace{-1mm}\end{subfigure}
\vspace{-4mm}\caption{\scriptsize (a) Classification accuracy for training and test distribution. (b) Proposed game-theoretic CSC Convergence vs training epoch. (c) Accuracy on novel combinations as a function of number of concepts fused, showing scalability}
\vspace{-0mm}
\label{EMLangCOnvergence}
\vspace{-7mm}\end{figure*}

\vspace{-0mm}Fig.~\ref{Seen_vs_Novel} shows that the  proposed CSC achieves the best performance in terms  of training and test classification accuracy. For novel combinations encountered during test time, the proposed method achieves gains of around $8\%$  compared to deep JSCC methods. Remote inference accuracy evolution during training phase is provided in Fig.~\ref{RI_accuracy}. Fig.~\ref{Complexity_Robustness} shows the consistent performance of proposed CSC algorithm as the number of concepts increases, while deep JSCC performance deteriorates as the number of concepts increases. This is because deep JSCC would require retraining as the number of concepts increases contrary to the proposed CSC method. Fig.~\ref{fig:numbits_sensors} shows that the number of bits communicated increases at a slower rate with the number of sensors thanks to the compositional generalization, compared to baseline methods whose compression schemes change as sensor dynamics change. Moreover, the task accuracy remains consistently high, even as the compression ratio is increased, for the proposed CSC.

\vspace{-4mm}\subsection{Using CARLA Environment}\vspace{-1mm}
We validate the game-theoretic SE framework using the CARLA simulator (version 0.9.14) \cite{dosovitskiy2017carla} with high-fidelity sensor models. The experimental setup consists of autonomous vehicles equipped with multiple sensors (front-facing RGB camera $800\times 600$ resolution, LIDAR generating $15,000$ points/scan, radar, and IMU) transmitting raw sensor data at $7.88$ MB/s to a central cloud BS. The BS performs real-time object detection using YOLO v3 architecture trained on the COCO dataset \cite{lin2014coco}, requiring inference accuracy $ \geq 85\%$ mean average precision (mAP) within $100$~ms for safety-critical decision-making. We implement four communication strategies: (1) Uniform: random concept selection (baseline), (2) Distributed gradient descent (GD) which performs decentralized optimization without coordination, (3) Cooperative multi-agent (MA), centralized with preferences aggregation, and (4) Game-theoretic CSC: our proposed Stackelberg game formulation where the BS is the leader optimizing channel allocation, and each vehicle is a follower selecting semantic concepts respecting transmission preferences. We use $10$ independent simulation runs, each with $50$ optimization iterations over $20$ communication rounds, varying network parameters (number of vehicles $1-6$, semantic concept count $4-20$, device utilization $20-100\%$) to comprehensively evaluate robustness. Performance is measured via five metrics: (a) BS utility at equilibrium, (b) compositionality score in \eqref{eq_cs_score}, (c) inference accuracy, defined as mean average precision (mAP) on real-time object detection, (d) bandwidth efficiency, defined as percentage reduction from raw $7.88$ MB/s baseline, and (e) end-to-end latency, which is the time from sensor capture to inference output. Statistical significance is established via mean $\pm$ standard deviation computed from $10$ independent runs with independent random seeds for each scenario.
\begin{figure}[t]
\centering
\begin{subfigure}[b]{0.49\columnwidth}
    \centering    \includegraphics[width=\columnwidth,height=1\columnwidth]{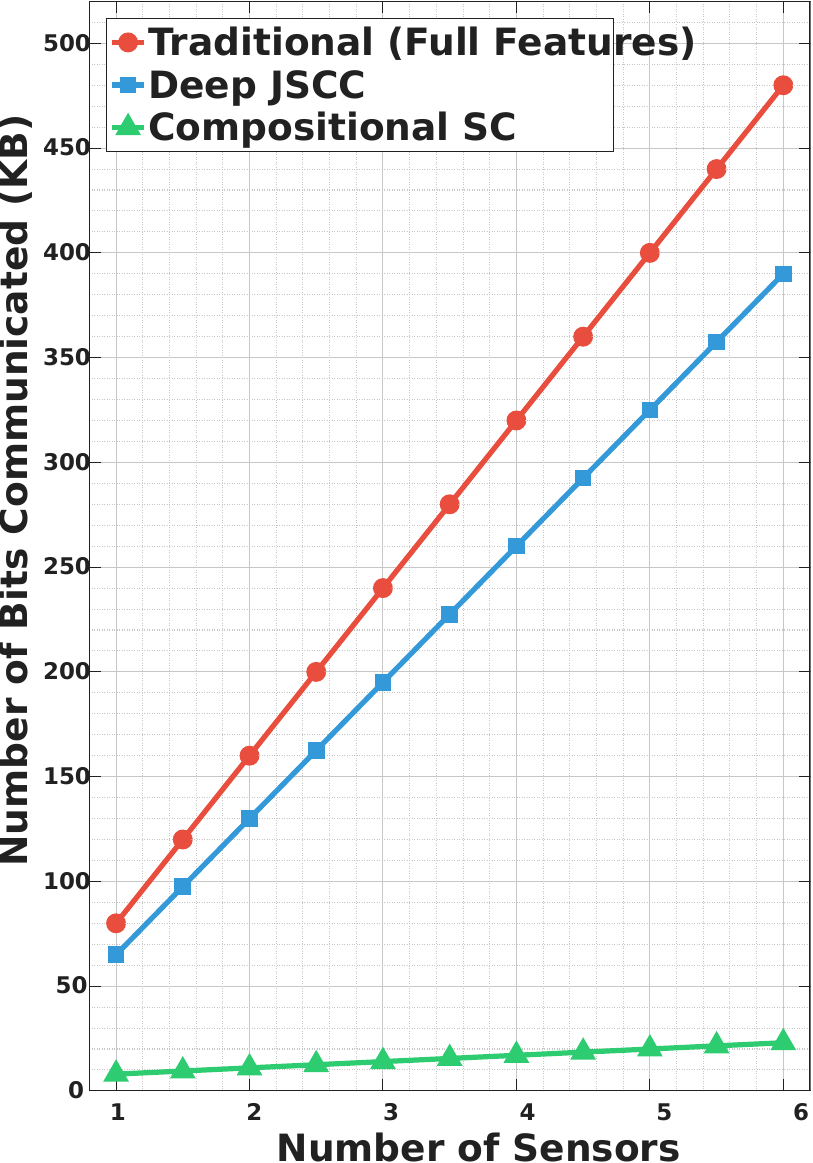}
   \vspace{-6mm} 
   \caption{\small }
\label{fig:numbits_sensors}
\end{subfigure}
\hfill
\begin{subfigure}[b]{0.49\columnwidth}
    \centering
    \includegraphics[width=\columnwidth,height=1\columnwidth]{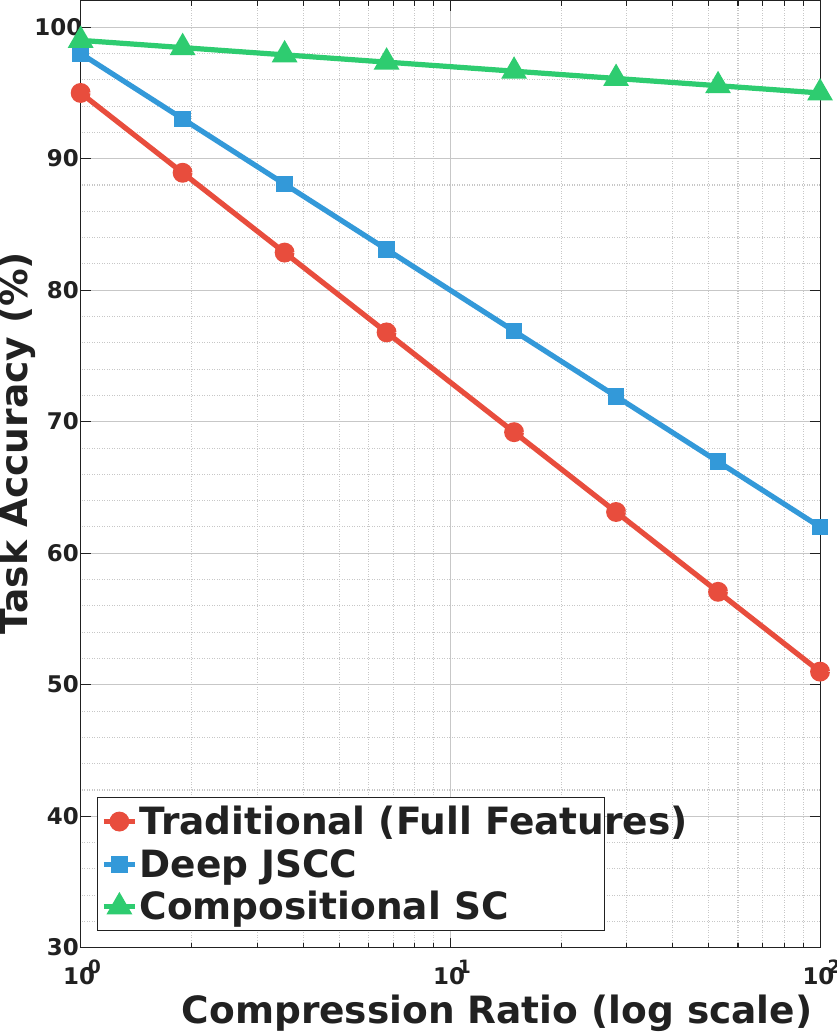}
\vspace{-6mm}\caption{\small }
    \label{fig:accuracy_compressions}
\end{subfigure}
\vspace{-7mm}\caption{\scriptsize (a) Number of bits communicated vs number of sensors, showing the scalability of proposed CSC compared to baselines. (b) Task accuracy vs compression ratio, which is the ratio of number of concepts communicated vs the total extracted semantic concepts. }
\label{fig:rewards_vs_Tmax}
\vspace{-3mm}\end{figure}

\subsubsection{Latency  improvements}
Fig.~\ref{fig:latency} shows that game-theoretic CSC achieves $70.8 \pm 3.2$ ms end-to-end latency, meeting the strict $<100$ms requirement for safety-critical autonomous driving with a $29.2$ ms safety margin providing headroom for system uncertainties and worst-case conditions. This represents a $55\%$ and $40\%$ improvement over the raw data baseline and distributed GD, respectively , demonstrating that CSC not only reduces bandwidth but actually accelerates remote inference processing through reduced data volume. This means that the proposed game-theoretic CSC enables real-time autonomous vehicle coordination while ensuring accurate  control decisions from remote inference. Moreover, the game-theoretic CSC exhibits scalability as the number of devices increases as shown in Fig.~\ref{fig:latency_devices}. Latency increases from $61.8$ ms (1 device) to $79.4$ ms ($6$ devices),  demonstrating sublinear latency growth. Cooperative MA scales from $73.2$ ms to $91.4$ ms ($+18.2$ ms), while Distributed GD ranges from $114.2$ ms to $127.8$ ms, showing that game-theoretic coordination minimizes communication overhead in multi-device systems through communicating relevant compositional semantics. 

\subsubsection{Bandwidth improvements}Fig.~\ref{fig:compositionality_concepts} shows that the game-theoretic CSC maintains $99\%$ bandwidth reduction as the number of semantic concepts increases, demonstrating strong robustness to representation richness. In contrast to the proposed CSC, baseline methods such as  distributed GD and uniform random methods show decreasing bandwidth efficiency with increased complexity. 
\begin{figure}[t]
\centering
\begin{subfigure}[b]{0.49\columnwidth}
    \centering    \includegraphics[width=0.8\columnwidth]{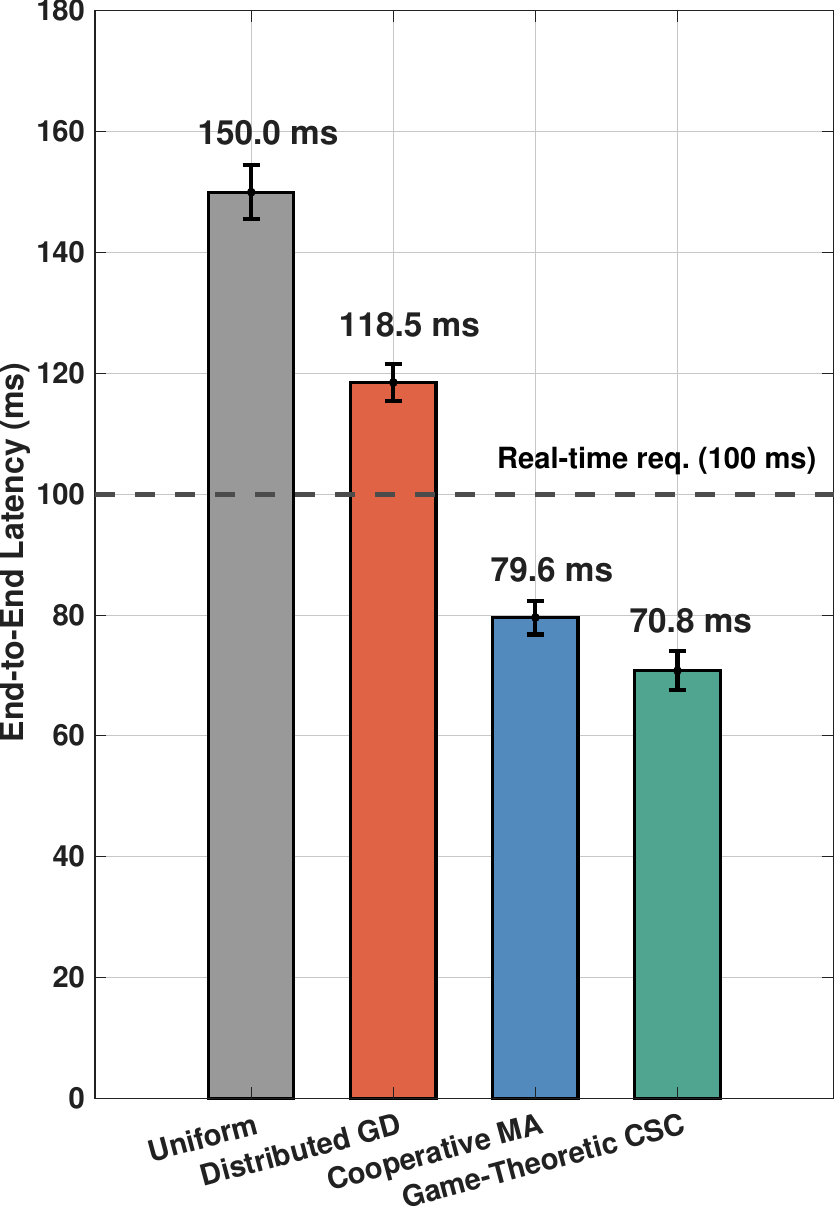}
   \vspace{-2mm} 
   \caption{\small }
\label{fig:latency}
\end{subfigure}
\hfill
\begin{subfigure}[b]{0.49\columnwidth}
    \centering
    \includegraphics[width=0.8\columnwidth]{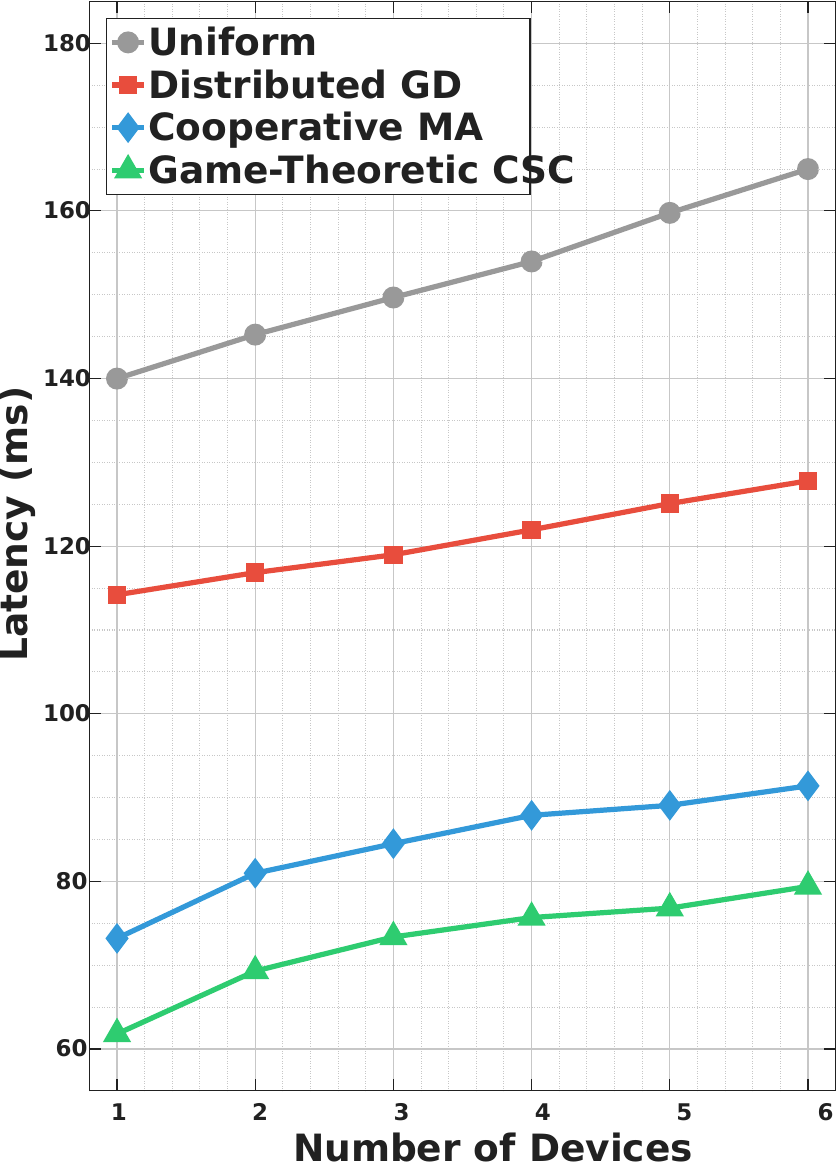}
\vspace{-2mm}\caption{\small }
    \label{fig:latency_devices}
\end{subfigure}
\vspace{-7mm}\caption{\scriptsize  (a) Average end-to-end latency for proposed CSC and baseline methods. (b) Scalability with respect to latency as the number of devices increase. }
\label{fig:rewards_vs_Tmax}
\vspace{-3mm}\end{figure}
\begin{figure}[t]
\centering
\vspace{-0mm}
\begin{subfigure}[b]{0.49\columnwidth}
    \centering
\includegraphics[width=0.94\columnwidth,height=1.2\columnwidth]{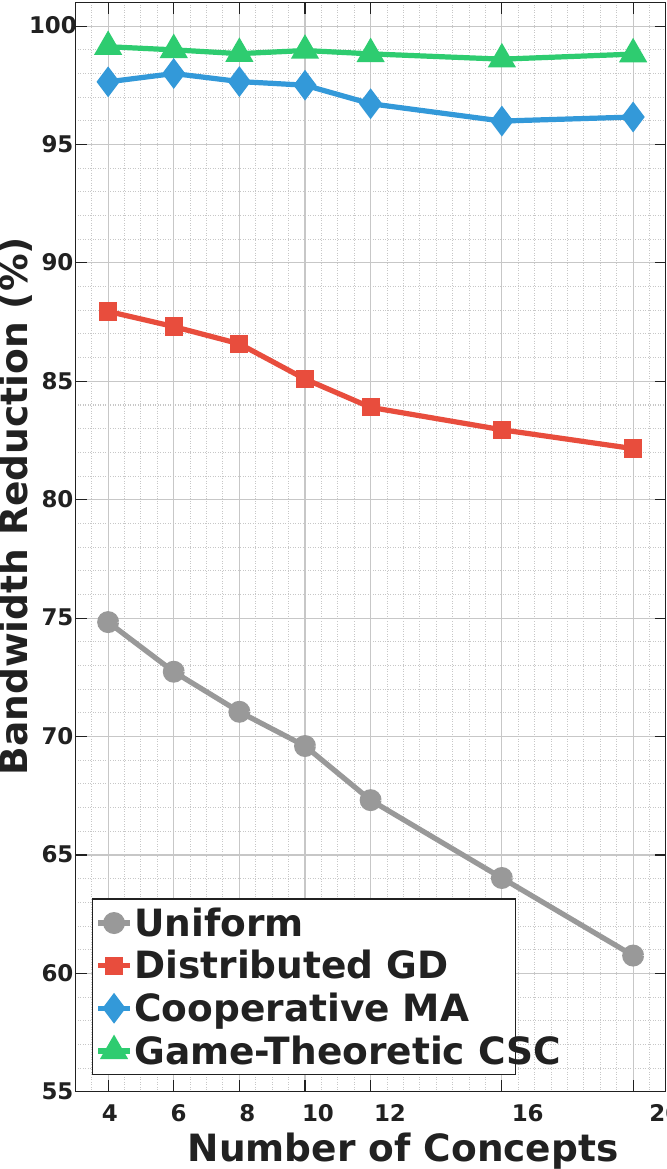}
\vspace{-2mm}\caption{\small }
    \label{fig:compositionality_concepts}
 \end{subfigure}
   \hfill
    \begin{subfigure}[b]{0.49\columnwidth}
    \centering    \includegraphics[width=0.94\columnwidth,height=1.2\columnwidth]{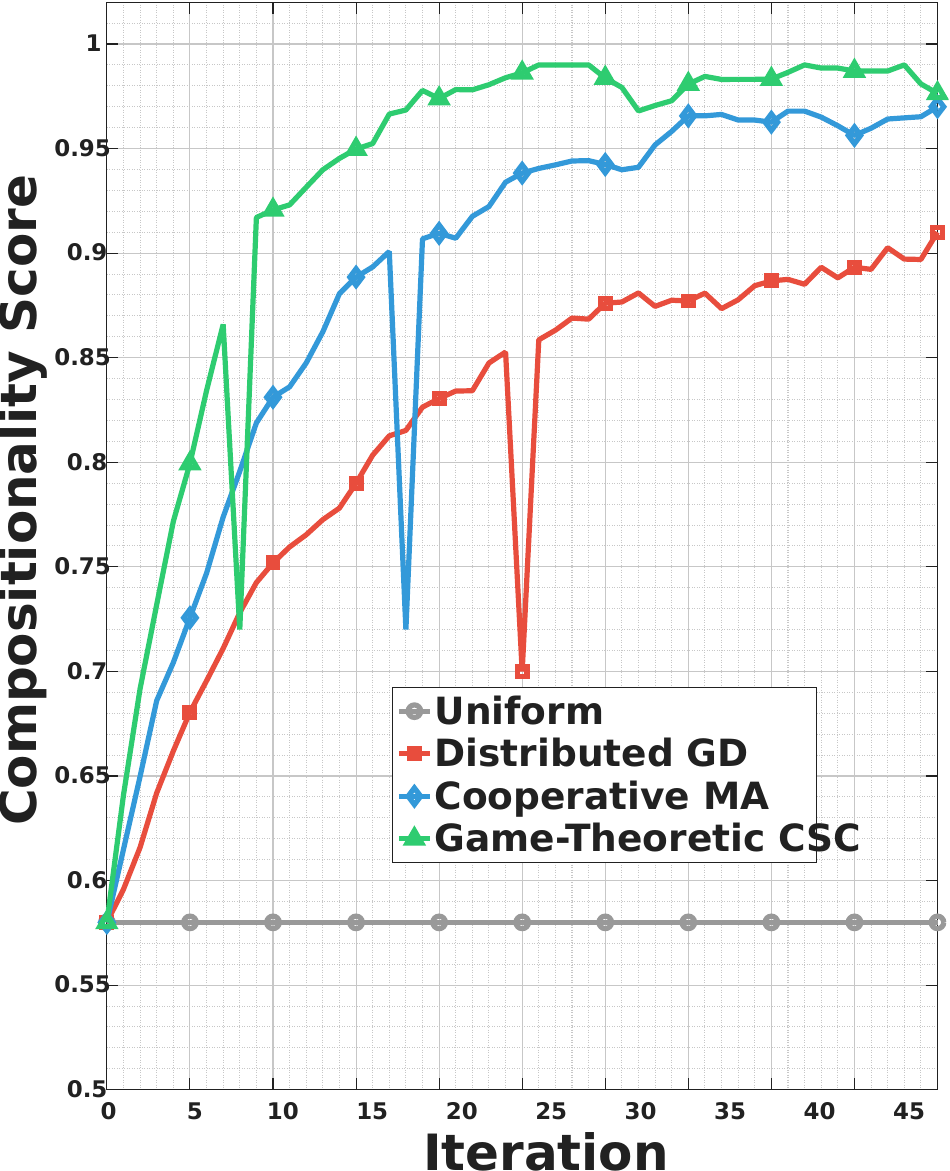}
   \vspace{-2mm} 
   \caption{\small }
\label{fig:compositionality}
\end{subfigure}
\vspace{-8mm}\caption{\scriptsize   (a) Bandwidth savings as the number of concepts. (b) Convergence of compositionality score as training progresses.}
\label{fig:rewards_vs_Tmax}
\vspace{-5mm}\end{figure}

\begin{figure}[t]
\centerline{\includegraphics[width=1\columnwidth,height=0.8\columnwidth,keepaspectratio]{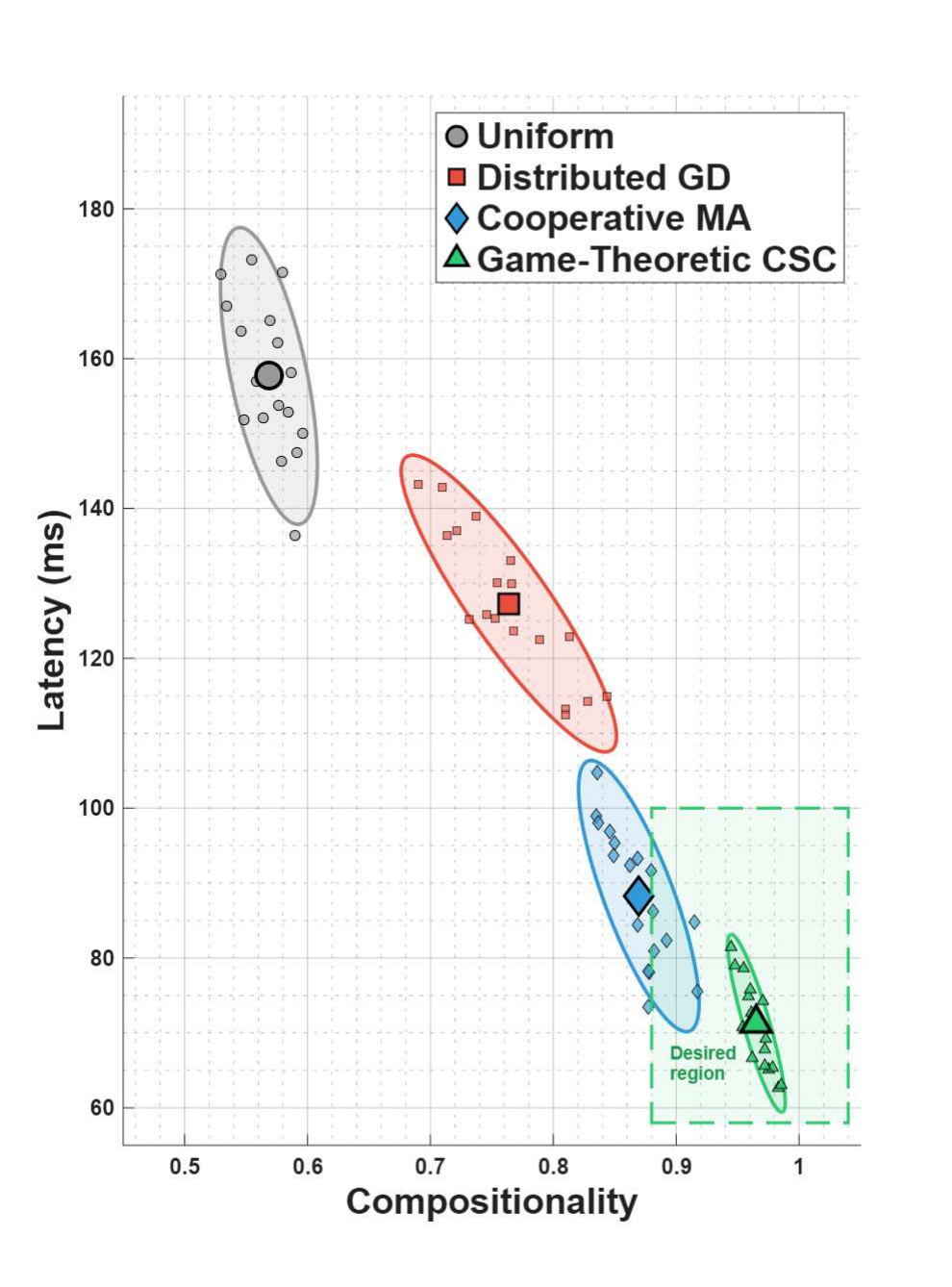}}\vspace{-5.5mm}
\caption{\small Latency vs compositionality tradeoff.}
\label{fig:latency_compositionality}\vspace{-0mm}
\vspace{-3mm}
\vspace{-0mm}\end{figure}

\subsubsection{Compositionality}
Fig.~\ref{fig:compositionality} shows that game-theoretic CSC and cooperative MA achieve a faster convergence of the compositionality score compared to other baselines. Moreover, the compositionality score of the Uniform baseline remains flat at a value of $0.58$ across all iterations and hence exhibit negligible compositionality. The faster compositionality score convergence and the resulting higher compositionality score stem from  the game-theoretic incentive structure that explicitly rewards agents for transmitting compositionally valuable semantic concepts. Unlike distributed GD which optimizes local objectives independently, or cooperative MA approaches that assume perfect cooperation, the game-theoretic formulation creates NE dynamics where each agent's best response naturally involves selecting and transmitting concepts that maximize both individual utility and system-wide compositionality. From Fig.~\ref{fig:latency_compositionality}, we can observe that game-theoretic CSC achieves superior latency-compositionality tradeoff , maintaining $68-78$ ms latency while achieving $0.8-1.0$ compositionality. This means that semantic structure preservation does not compromise real-time performance.  Cooperative MA shows moderate performance ($80-110$ ms at $0.8-0.98$ compositionality), while Distributed GD shows poor trade-off ($100-160$ ms at $0.6-0.95$ compositionality), proving that game-theoretic incentives naturally align low-latency and high-compositionality objectives. 

\vspace{-4mm}\section{Conclusion}\vspace{-2mm}

In this paper, we have introduced the first comprehensive framework for CSC in physical AI systems addressing fundamental limitations of existing SC approaches: limited real-time adaptability, poor generalization to out-of-distribution data, and scalability challenges in coordinating heterogeneous sensing sources. By transmitting only compositional SRs rather than raw sensor data, our approach achieves significantly reduced communication overhead compared to traditional methods that transmit full-dimensional sensor observations. Unlike deep JSCC and other reasoning driven SC methods, our framework provides explicit compositional structure through lens operations and Grothendieck topos. Semantic alignment across heterogeneous sensing devices is achieved via Stackelberg game formulation, which is solved through ADMM method. Extensive simulations on realistic autonomous driving scenarios have demonstrated substantial bandwidth reduction while maintaining high inference accuracy. The compositional structure is shown to enable deductive reasoning at the RX, allowing the system to handle novel combinations of concepts without retraining, addressing   generalizability and scalability challenges in data-driven DL based SC. 

\vspace{-3mm}\bibliographystyle{IEEEbib}
\def\baselinestretch{0.82}
\bibliography{refs,semantics_ref}

\vspace{-2mm}\appendices
\vspace{-5mm}\section{Proof of Theorem 1}
\label{Proof_Theorem1}
In our multi-device SC system, an object is a semantic 
concept or inference target $y$. The Maximality Axiom ensures completeness by 
guaranteeing that the maximal sieve $M_y = \{c_k \mid k \in \{1, \ldots, K\}\}$ 
contains all device representations, so every device's contribution can be 
considered without excluding any relevant information. 
The Stability Axiom ensures if $\bmz_k$ alone or with subset $\{z_j \mid j \in S\}$ is consistent for $y$, then $\bmz_k$ with any smaller subset is also consistent. This property preserves semantic 
alignment when devices dynamically join or leave the system. 
The Transitivity Axiom ensures compositional closure by multi-hop paths $z_k \to z_{\text{intermediate}} \to y$ preserve consistency. This property enables multi-hop semantic reasoning across 
devices while preserving validity. 
Together, these three axioms guarantee that composed representations from $K$ 
heterogeneous devices maintain well-defined semantics at the BS without requiring 
explicit semantic alignment protocols.
\vspace{-2mm}\section{Proof of Lemma~\ref{lemma_fiberproduct}}\vspace{-2mm}
\label{proof_lemma_fiberproduct}
For semantic alignment, by definition of the fiber product, if $(\bmz_1, \bmz_2) \in \mZ_{\{12\}}$, then $p^{\#}_1(\bmz_1) = p^{\#}_2(\bmz_2)$. Call this common symbol $\ell$. Both representations belong to the fibers $(p^{\#}_1)^{\wedge \{-1\}}(\ell)$ and $(p^{\#}_2)^{\wedge \{-1\}}(\ell)$ respectively, \textit{ensuring they are semantically aligned to the same canonical meaning}. For the universal property, suppose $\bmz_1 \in \mZ_1$ and $\bmz_2 \in \mZ_2$ satisfy $p^{\#}_1(\bmz_1) = p^{\#}_2(\bmz_2) = \ell$. By definition of the fiber product, $(\bmz_1, \bmz_2) \in \mZ_{\{12\}}$ because this condition is exactly the fiber product's defining constraint. \textit{Uniqueness follows because the fiber product is constructed precisely as the set of pairs satisfying this condition, so no other pair in $\mZ_{\{12\}}$ projects to $\bmz_1$ and $\bmz_2$ simultaneously}. For closure under composition, suppose $(\bmz_1, \bmz_2) \in \mZ_{\{12\}}$, so $p^{\#}_1(\bmz_1) = p^{\#}_2(\bmz_2) = \ell$. Let $(\bmz'_1, \bmz'_2) \in \mZ_{\{12\}}$, so $p^{\#}_1(\bmz'_1) = p^{\#}_2(\bmz'_2) = \ell'$ (possibly different from $\ell$). For composed objects $\bmz_1 \oplus \bmz'_1 \in \mZ_1$ and $\mZ_2 \oplus \bmz'_2 \in \mZ_2$, the fibrational structure is preserved when $p^{\#}_1(\bmz_1 \oplus \bmz'_1) = p^{\#}_1(\bmz_1) \otimes p^{\#}_1(\bmz'_1) = \ell \otimes \ell'$ and $p^{\#}_2(\bmz_2 \oplus \bmz'_2) = p^{\#}_2(\bmz_2) \otimes p^{\#}_2(\bmz'_2) = \ell \otimes \ell'$, where $\otimes$ denotes composition in $L$. Since both map to $\ell \otimes \ell'$, the composed pair remains in the fiber product over $\ell \otimes \ell'$, preserving the compositional structure. For $K$ devices, proof follows by induction.

\vspace{-4mm}\section{Proof of Theorem~\ref{thm:existence}}\vspace{-2mm}
\label{proof_thm:existence}
Follows from Rosen's theorem on concave games \cite{rosen1965concave}. 
The strategy spaces are compact and convex by construction (probability 
simplex $\Delta^{D-1}$). The utility functions are continuous from the continuity 
of $\mathbb{S}(\bmy;\bmz_k), \mathbb{H}(\bmz_k),$ and $\mathbb{I}(\bmz_k;\bmz_j)$. Quasi-concavity follows from the quadratic 
form of the redundancy term and the concavity of entropy. Therefore, by 
\cite[Theorem 1]{rosen1965concave}, a pure-strategy NE exists. 


\vspace{-4mm}\section{Proof of Theorem~\ref{thm:stackelberg_existence}}\vspace{-2mm}
\label{proof_thm:stackelberg_existence}
By backward induction. For any device strategies $\{\theta_k\}$, the BS's problem \eqref{eq:bs_problem} admits a solution by compactness and quasi-concavity (Lemma 4). For any BS strategy $(\phi,\psi)$, Theorem 2 guarantees a devices' NE exists. The induced game among devices, where each device anticipates $\text{BR}_{\text{BS}}$, admits a fixed point by the Debreu-Glicksberg-Fan theorem. This fixed point, combined with $\text{BR}_{\text{BS}}$, constitutes a Stackelberg equilibrium.
\vspace{-5mm}

\section{Proof of Theorem~\ref{thm:pareto_efficiency}}\vspace{-2mm}
\label{proof_pareto}

Suppose by contradiction that the equilibrium is not Pareto efficient. Then there exists an alternative strategy profile $(\tilde{\bm{\theta}}, \tilde{\bm{w}}, \tilde{\phi}, \tilde{\psi})$ such that $U_k(\tilde{\bm{\theta}}_k, \tilde{\bm{w}}_k; \cdot) \geq U_k(\bm{\theta}^*_k, \bm{w}^*_k; \cdot)$ for all $k$ with strict inequality for at least one $k$, and $U_{\text{BS}}(\tilde{\phi}, \tilde{\psi}; \cdot) \geq U_{\text{BS}}(\phi^*, \psi^*; \cdot)$. Consider the TX $k'$ for whom the inequality is strict. By the definition of equilibrium, $\bm{\theta}^*_{k'}$ maximizes $U_{k'}$ over all feasible strategies. Since $(\tilde{\bm{\theta}}_{k'}, \tilde{\bm{w}}_{k'})$ yields strictly higher utility, we must have $\bm{\theta}^*_{-k'} \neq \tilde{\bm{\theta}}_{-k'}$.
By the complementarity condition \eqref{eq:complementarity}, increasing $\mathbb{S}(\bm{y}; \bm{z}_j)$ also increases $\mathbb{S}(\bm{y}; \bm{z}_{k'})$, which would increase $U_{k'}$. However, since $(\bm{\theta}^*_{k'}, \bm{w}^*_{k'})$ was already optimal, this leads to a contradiction. Therefore, no Pareto-improving deviation exists.
\end{document}